\newcommand{\sa}{synchronizing automata}
\newcommand{\san}{synchronizing automaton}
\newcommand{\scn}{strongly connected}
\newcommand{\sw}{reset word}
\newcommand{\sws}{reset words}
\newcommand{\ssw}{reset word of minimum length}
\newcommand{\rl}{reset threshold}
\DeclareSymbolFont{rsfscript}{OMS}{rsfs}{m}{n} \DeclareSymbolFontAlphabet{\mathrsfs}{rsfscript}
\newcommand{\qqed}{\hfill$\Box$}
\newtheorem{theorem}{Theorem}
\newtheorem{prop}{Proposition}
\newtheorem{lemma}{Lemma}
\renewenvironment{proof}{\trivlist\item[\hskip\labelsep{\bf
Proof}.] }{\qqed\endtrivlist}
\newtheorem{conjecture}{Conjecture}
\begin{document}
\title{Primitive digraphs with large exponents\\ and slowly
synchronizing automata\thanks{A preliminary version of a part of the results of this paper was published in~\cite{AGV10}.}}

\author{D. S. Ananichev \and  V. V. Gusev \and M. V. Volkov}

\date{}

\maketitle

\begin{abstract}
We present several infinite series of \sa\ for which the minimum
length of reset words is close to the square of the number of
states. All these automata are tightly related to primitive
digraphs with large exponent.
\end{abstract}

\section{Background and the structure of the paper}

This paper has arisen from our attempts to find a theoretical explanation for the results of certain computational experiments in
synchronization of finite automata. Recall that a (\emph{complete deterministic}) \emph{finite automaton} (DFA) is a triple
$\mathrsfs{A}=\langle Q,\Sigma,\delta\rangle$, where $Q$ and $\Sigma$ are finite sets called the \emph{state set} and the \emph{input
alphabet} respectively, and $\delta:Q\times\Sigma\to Q$ is a totally defined function called the \emph{transition function}. As usual,
$\Sigma^*$ stands for the collection of all finite words over the alphabet $\Sigma$, including the empty word 1. The function $\delta$
extends to a function $Q\times\Sigma^*\to Q$ (still denoted by $\delta$) as follows: for every $q\in Q$ and $w\in\Sigma^*$, we set
$\delta(q,w)=q$ if $w=1$ and $\delta(q,w)=\delta(\delta(q,v),a)$ if $w=va$ for some $v\in\Sigma^*$ and $a\in\Sigma$. Thus, via $\delta$,
every word $w\in\Sigma^*$ acts on the set $Q$.

A DFA $\mathrsfs{A}=\langle Q,\Sigma,\delta\rangle$ is said to be
\emph{synchronizing} if some word $w\in\Sigma^*$ brings all states
to one particular state: $\delta(q,w)=\delta(q',w)$ for all
$q,q'\in Q$. Any such word $w$ is said to be a \emph{reset word}
for the DFA. The minimum length of reset words for $\mathrsfs{A}$
is called the \emph{\rl} of $\mathrsfs{A}$.

Synchronizing automata serve as transparent and natural models of error-resistant systems in many applied areas (system and protocol
testing, information coding, robotics). At the same time, \sa\ surprisingly arise in some parts of pure mathematics (symbolic dynamics,
theory of substitution systems and others). Basics of the theory of \sa\ as well as its diverse connections and applications are discussed,
for instance, in~\cite{Sa05,Vo08}. Here we focus on only one aspect of the theory, namely, on the question of how the \rl\ of a DFA depends
on the state number.

For brevity, a DFA with $n$ states will be referred to as an
$n$-\emph{automaton}. In~1964 \v{C}ern\'{y}~\cite{Ce64}
constructed a series of synchronizing $n$-automata with \rl\
$(n-1)^2$. Soon after that he conjectured that these automata
represent the worst possible case with respect to synchronization
speed, i.e.\ that every synchronizing $n$-automaton can be reset
by a word of length $(n-1)^2$. This hypothesis has become known as
the \emph{\v{C}ern\'{y} conjecture}. In spite of its simple
formulation and many researchers' efforts, the \v{C}ern\'{y}
conjecture remains unproved (and undisproved) for more than 45
years. Moreover, no upper bound of magnitude $O(n^2)$ for the \rl\
of a synchronizing $n$-automaton is known so far\footnote{Up
today, the best upper bound on the \rl\ of a synchronizing
$n$-automaton is the bound $\frac{n^3-n}6$ found by
Pin~\cite{Pi83} in 1983. A slightly better upper bound
$\frac{n(7n^2+6n-16)}{48}$ has been recently published
in\cite{Tr11} but the proof of this result contains an unclear
place.}.

Why is the \v{C}ern\'{y} conjecture so inaccessible? A detailed discussion of this important issue would go far beyond the scope of the
present paper but one of the difficulties encountered by the theory of \sa\ is worth registering here. We mean the shortage of examples of
\emph{extremal} automata, i.e.\ $n$-automata having \rl\ $(n-1)^2$. In fact, the series found in~\cite{Ce64} still remains the only known
infinite series of extremal automata. Besides that, we know only a few isolated examples of such automata, the largest (with respect to the
state number) being the 6-automaton discovered by Kari~\cite{Ka01} in 2001. (See~\cite{Vo08} for a complete list of known extremal
automata.) Moreover, even $n$-automata whose \rl\ is close to $(n-1)^2$ have been very rare in the literature so far---besides the
\v{C}ern\'{y} series one can only refer to the series from~\cite{AVZ}. With a very restricted number of examples, it was difficult to
verify various guesses and assumptions that arose when researchers were searching for approaches to the \v{C}ern\'{y} conjecture. That is
why the history of investigations in this area abounds in  ``false trails'', i.e.\ auxiliary hypotheses that looked promising at first but
were disproved after some time. (Cf.~\cite{Be11} for an analysis of a number of such ``false trails''.)

How can one find slowly \sa? Experiments (see, e.g., \cite{ST11}) demonstrate that with probability very close to~1, a random automaton is
reset by a word of length much less than the state number. Therefore it is impossible to encounter by chance an automaton whose \rl\ is
close to the square of state number, and one has to reveal such automata via exhaustive search. It was such an exhaustive search experiment
that served as a departure point of the present paper.

Our methodology and some results of the experiment are described in Section~2. We have noticed a similarity between the observable behavior
of the number of \sa\ with a fixed number of states as a function of their \rl\ and the well-studied behavior of the number of primitive
digraphs with a fixed number of vertices as a function of their exponent. We discuss this similarity in Section~3 after recalling the
necessary concepts and facts from the theory of primitive digraphs. The main results of the paper are collected in Section~4. We show that
slowly \sa\ revealed in our experiment represent some infinite series of such automata and that each of these series is tightly related to
a certain known series of primitive digraphs with large exponent. This connection between digraphs and automata allows us to provide
transparent and uniform proofs for all statements concerning the \rl\ of series of slowly \sa---both for the previously known series and
the series that have first appeared in the present paper. The proof technique is, to the best of our knowledge, new and appears to be of
independent interest. In Section~5 we discuss further prospects of the suggested  approach and formulate a few new conjectures.

\section{Methodology and some results of the experiment}

As mentioned in Section~1, finding automata whose \rl\ is close to
the square of state number requires an exhaustive search. Since
the quantity of $n$-automata drastically grows with $n$, such a
search should be designed in a reasonable way. For instance,
specifying a 9-automaton with two input letters is equivalent to
specifying a pair of function on a 9-element set. There are
$9^{18}\approx 1.50\times 10^{17}$ such pairs, and if one will
spend one nanosecond for calculating the \rl\ of each automaton
defined this way, the exhaustive search would take around five
years. Clearly, if an $n$-automaton with $k$ input letters is
specified by a $k$-tuple of function on an $n$-element set, each
particular automaton is generated $n!k!$ times. However it is not
possible to speed up the search by screening out isomorphic
automata---even for $n=9$ and $k=2$ neither time nor memory would
suffice to check whether the current automaton is isomorphic to
one of the previously generated automata.

In order to optimize search, we have employed a string representation of initially-connected automata suggested in~\cite{AMR}. Recall that
a DFA $\mathrsfs{A}=\langle Q,\Sigma,\delta\rangle$ is said to be \emph{initially-connected} (from a state $q_0$), if one can reach any
state from $q_0$ by applying a suitable word: for every $q\in Q$ there exists $w\in\Sigma^*$ such that $q=\delta(q_0,w)$. A DFA which is
initially-connected from each of its states is called \emph{\scn}. In general, a \san\ may fail to be \scn\ or initially-connected. However
it is well known that one can restrict to \scn\ automata when dealing with issues related to the \v{C}ern\'{y} conjecture. This is a
consequence of the following easy result.
\begin{prop}[\!{\mdseries\cite[Proposition~2.1]{Vo09}}]
\label{reduction} Let $\mathrsfs{A}=\langle
Q,\Sigma,\delta\rangle$ be a \san\ and let $S$ be the set of all
states to which $\mathrsfs{A}$ can be reset. Then
$\mathrsfs{S}=\langle S,\Sigma,\delta|_S\rangle$ is a \scn\
subautomaton in $\mathrsfs{A}$ and for every function
$f:\mathbb{Z}^+\to\mathbb{N}$ satisfying
$$f(n)\ge\frac{n(n-1)}2\ \text{ and }\ f(n)\ge f(n-m+1)+f(m) \text{ for } n\ge m\ge 1,$$
the fact that the \rl\ of the automaton $\mathrsfs{S}$ is bounded by $f(|S|)$ implies that the \rl\ for $\mathrsfs{A}$ does not exceed
$f(|Q|)$.
\end{prop}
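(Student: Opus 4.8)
The plan is to prove the two assertions separately, treating first the structural claim that $\mathrsfs{S}$ is a \scn\ subautomaton and then the quantitative bound on the \rl.

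For the structural part, I would first check that $S$ is closed under the action of every letter, so that $\mathrsfs{S}$ is genuinely a subautomaton. Indeed, if $w$ is a \sw\ with $\delta(Q,w)=\{s\}$ and $a\in\Sigma$, then $\delta(Q,wa)=\{\delta(s,a)\}$, so $wa$ is again a \sw\ and $\delta(s,a)\in S$; hence $S$ is invariant. Strong connectivity is equally direct: given $s,s'\in S$, pick a \sw\ $w'$ with $\delta(Q,w')=\{s'\}$; since $s\in Q$ we get $\delta(s,w')=s'$, so every state of $S$ can be steered to every other one. (It is worth noting in passing that $S$ is exactly the unique sink \scn\ component of $\mathrsfs{A}$, which is what makes the construction below work.)

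For the quantitative part I would argue by induction on $n=|Q|$, writing $k=|S|$ and using the second condition on $f$ with $m=k$. If $k=n$, then $\mathrsfs{A}=\mathrsfs{S}$ and the bound $f(n)$ is exactly the hypothesis. If $2\le k\le n-1$, collapse the whole sink $S$ to a single absorbing state: form $\mathrsfs{B}=\langle (Q\setminus S)\cup\{*\},\Sigma,\delta_*\rangle$, where a transition leaving $Q\setminus S$ keeps its target unless that target falls into $S$, in which case it is redirected to $*$, and $*$ loops on every letter. The canonical map collapsing $S$ to $*$ is a morphism (here closedness of $S$ is used), so $\mathrsfs{B}$ has $n-k+1$ states, $*$ is absorbing, and $*$ is reachable from every state because $S$ is reachable from every state of $\mathrsfs{A}$; thus $\mathrsfs{B}$ is a \san\ whose set of reset targets is the single state $*$. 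By the induction hypothesis applied to $\mathrsfs{B}$ (its sink has \rl\ $0\le f(1)$, and $n-k+1<n$), $\mathrsfs{B}$ has a \sw\ $u$ of length at most $f(n-k+1)$. Lifting $u$ back to $\mathrsfs{A}$ gives $\delta(Q,u)\subseteq S$. Appending a \sw\ $w_S$ for $\mathrsfs{S}$, of length at most $f(k)$ by hypothesis, the word $uw_S$ resets $\mathrsfs{A}$ and satisfies $|uw_S|\le f(n-k+1)+f(k)\le f(n)$.

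This reduces everything to the single remaining case $k=1$, i.e.\ a \san\ with an absorbing sink state $z$ reachable from every state, and I expect this to be the main obstacle. Here the recursion on $f$ degenerates ($m=1$ only yields $f(1)\le 0$, and in general no other proper subautomaton need exist), so only the inequality $f(n)\ge\frac{n(n-1)}2$ is available, and I must show that such an automaton has a \sw\ of length at most $\frac{n(n-1)}2$. My plan is to build the word greedily: keeping track of the current image $P$ (which always contains $z$, since $z$ is absorbing), I repeatedly steer one state of $P$ into $z$ along a shortest path, so that $|P|$ drops by at least one at each stage; arranging the $j$-th stage to cost a word of length at most $n-j$ makes the total telescope to $\sum_{j=1}^{n-1}(n-j)=\frac{n(n-1)}2$. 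The delicate point, and the crux of the whole argument, is to justify this per-stage bound: for an arbitrary subset the nearest non-sink state may lie far from $z$, so one has to choose the reductions carefully---equivalently, control which subsets can arise as intermediate images---in order to guarantee that a size-$t$ image can always be shrunk by a word of length at most $t-1$.
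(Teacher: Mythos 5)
Your structural argument and your collapsing reduction are both sound, and since the paper itself gives no proof of this proposition (it is quoted from \cite{Vo09}), your proposal has to stand on its own; its skeleton --- close $S$ under the action, get strong connectivity from reset words, collapse $S$ to an absorbing state $*$, and split $f(n)\ge f(n-m+1)+f(m)$ across the two phases --- is exactly the right one. The genuine gap is the case you yourself flag as the crux, and it is load-bearing: every branch of your induction bottoms out in the claim that a synchronizing $n$-automaton with an absorbing state $z$ reachable from every state has a reset word of length at most $\frac{n(n-1)}2$. Worse, the per-stage bound you propose for it is false: one cannot ``guarantee that a size-$t$ image can always be shrunk by a word of length at most $t-1$''. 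Take the one-letter automaton on $\{q_1,\dots,q_{n-1},z\}$ with $q_i\mapsto q_{i+1}$, $q_{n-1}\mapsto z$, $z\mapsto z$: the size-$2$ set $\{q_1,z\}$ cannot be compressed by any word of length less than $n-1$. No control over which subsets arise as intermediate images can repair this, because the correct monotonicity is the opposite of the one you guessed: large images are cheap to compress, small ones may cost up to $n-1$.

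The standard fix runs as follows. Let $d(q)$ be the length of a shortest word taking $q$ to $z$, and set $D_i=\{q: d(q)\le i\}$. Then $D_0=\{z\}$, and if $D_{i+1}=D_i$ then $D_j=D_i$ for all $j\ge i$; since every state reaches $z$, this forces $D_i=Q$. Hence the $D_i$ grow strictly until they exhaust $Q$, so $|D_i|\ge\min(n,i+1)$. Now if $P\ni z$ and $|P|=t\ge 2$, then $|D_{n-t+1}|+|P|\ge (n-t+2)+t>n$, so $P$ meets $D_{n-t+1}$ in some state $q\ne z$; the shortest word taking $q$ to $z$ has length at most $n-t+1$ and maps $P$ to a set of size at most $t-1$, because $z$ is fixed by every letter. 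Iterating greedily from $P=Q$, the stage at which the image has size $t$ costs at most $n-t+1$ (so the cheap stages come \emph{first}), and the total is $\sum_{t=2}^{n}(n-t+1)=\frac{n(n-1)}2$. Substituting this lemma for your conjectured one closes the $k=1$ case and, with the rest of your argument unchanged, completes the proof.
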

In particular, taking $(n-1)^2$ as $f(n)$, we can conclude that if
$\mathrsfs{A}$ is a counterexample for the \v{C}ern\'{y}
conjecture, then so is the \scn\ subautomaton $\mathrsfs{S}$.
Similarly, if $\mathrsfs{A}$ has \rl\ close to the square of state
number, then so does $\mathrsfs{S}$. Thus, restricting our search
to initially-connected automata, we do not risk to overlook a
counterexample for the \v{C}ern\'{y} conjecture or any interesting
slowly \san.

Now we describe the string representation from~\cite{AMR}. Let a
DFA $\mathrsfs{A} = \langle Q, \Sigma, \delta\rangle$ be
initially-connected from a state $q_0$. We fix a linear ordering
of the input alphabet $\Sigma$ and traverse the DFA by
breadth-first search starting at $q_0$ and choosing the outgoing
transitions according to the ordering. Let the state $q_0$ have
number 0 and let all other states in $Q$ be numbered in the order
of their appearance in breadth-first search. For instance, the
states of the DFA in Fig.~\ref{fig:AMR} are numbered as follows:
\begin{center}
\begin{tabular}{c|c|c|c}
$A$ & $B$ & $C$ & $D$\\
\hline 0 & 2 & 1& 3\end{tabular}
\end{center}
provided that the input letters are ordered as $a<b<c$ and the state $A$ is chosen as $q_0$.
\begin{figure}[ht]
\begin{center}
\unitlength .8mm
\begin{picture}(84,45)(0,-60)
\gasset{Nw=11.0,Nh=11.0,Nmr=5.5}
\node[iangle=90.0,ilength=6.0,Nmarks=i](n1)(23.99,-24.0){$A$}
\node(n2)(60.49,-24.25){$C$} \node(n3)(59.9,-56.18){$D$}
\node(n6)(25.06,-56.08){$B$} \drawloop[loopangle=180.0](n1){$c$}
\drawloop[loopangle=0.0](n3){$b$} \drawloop[loopangle=180.0](n6){$c$}
\drawedge[curvedepth=3.6](n1,n2){$a$} \drawedge[curvedepth=3.6](n2,n1){$c$}
\drawedge[ELside=r,curvedepth=-3.6](n3,n2){$a$}
\drawedge[ELside=r,curvedepth=-3.6](n2,n3){$b$}
\drawedge[ELside=r,curvedepth=-3.6](n6,n1){$b$}
\drawedge[ELside=r,curvedepth=-3.6](n1,n6){$b$}
\drawedge[curvedepth=3.6](n6,n3){$a$} \drawedge[curvedepth=3.6](n3,n6){$c$}
\drawedge(n2,n6){$a$}
\end{picture}
\caption{The DFA with the canonical string
$[1,2,0,2,3,0,3,0,2,1,3,2]$}\label{fig:AMR}
\end{center}
\end{figure}
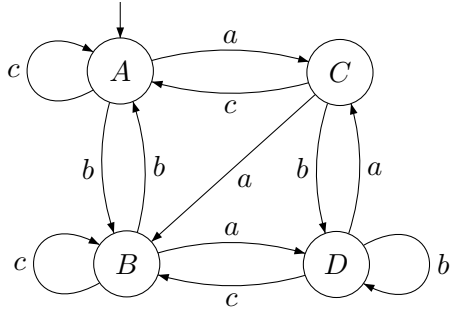

\noindent We assign a string of length  $|\Sigma|$ to each state
$q$ of $\mathrsfs{A}$; the $i$-th position of the string holds the
number of the state to which $q$ is sent under the action of the
$i$-th letter. If we concatenate all these strings in the
increasing order of the state numbers, we get a string of numbers
from the set $\{0,1,\dots,|Q|-1\}$ that has length $|Q||\Sigma|$
and uniquely determines $\mathrsfs{A}$. It is called the
\emph{canonical string} of the DFA $\mathrsfs{A}$. For instance,
the canonical string of the DFA in Fig.~\ref{fig:AMR} is
$[1,2,0,2,3,0,3,0,2,1,3,2]$.

It is not hard to see (cf.~\cite[Theorem~5]{AMR}) that a string
$[s_0,\dots,s_{nk-1}]$ of numbers from the set $\{0,1,\dots,n-1\}$
is a canonical string of an initially-connected $n$-automaton with
$k$ input letters if and only if the following two conditions are
satisfied:
\begin{enumerate}
\item[(R1)] for each $i$ such that $s_i>1$, there is $j<i$ with $s_j = s_i-1$,
\item[(R2)] for each $m$ such that $1\le m <n$, there is $j<mk$ with $s_j=m$.
\end{enumerate}

Using this observation, we calculated \rl{}s of
initially-connected $n$-automata with two input letters as
follows. We used a 128-core grid of AMD Opteron 2.6 GHz
processors. The grid belongs to the Institute of Mathematics and
Mechanics of the Ural Branch of the Russian Academy of Sciences;
it runs under Linux, has 256 Gb of memory and the peak performance
of 665.6 GFLOPS. One node of the gird generated relatively small
portions of strings satisfying (R1) and (R2) and sent them to
other nodes that worked on their portions of automata in parallel.
The management program was written in C with MPI. Standard
algorithms (cf.~\cite{Sa05,Vo08}) were implemented to test whether
the current automaton is synchronizing and to calculate its \rl.
Both implementations were written in C. We notice that the
synchronization test is very fast as it works on the digraph of
pairs of states while the calculation of \rl\ works on the digraph
of non-empty sets of states and in the worst case its running time
exponentially depends on the size of the automaton under
inspection\footnote{It is known \cite[Theorem~4]{OU10} that the
problem of computing the \rl\ of a given automaton is complete for
the functional analogue $\mathsf{FP}^\mathsf{NP[log]}$ of the
complexity class $\mathsf{P}^\mathsf{NP[log]}$ consisting of all
decision problems solvable by a deterministic polynomial-time
Turing machine that has an access to an oracle for an
\textsf{NP}-complete problem, with the number of queries being
logarithmic in the size of the input.}. However in practice the
\rl\ was calculated fairly fast since, as mentioned in Section~1,
it is small for an overwhelming majority of automata.

Presenting automata via their canonical strings drastically
reduces the exhaustive search. (It is easy to see, for instance,
that every $n$-automaton with 2 input letters is generated at most
$2n$ times if presented this way.) Nevertheless, the search still
remains quite large. For $n=9$, say, the number of automata to be
analyzed was 705\,068\,085\,303. However, thanks to
parallelization the computation for $n=9$ took less than 24 hours.

As the result of the computation, we built an array that, for each
possible value of \rl, contains the number of automata attaining
this value. A part of results that was of outmost importance for
us (namely, the part related to slowly \sa) had been
double-checked with the package TESTAS~\cite{Tr06}.

Table~\ref{automata} shows a part of the output array for the case
$n=9$. Here automata are counted up to isomorphism.
\begin{table}[ht]
\extrarowheight=1pt \tabcolsep=4.8pt \caption{Reset thresholds of
synchronizing 9-automata with two input letters} \label{automata}
{\small
\begin{tabular}{|p{2.9cm}||c|c|c|c|c|c|c|c|c|c|c|c|c|c|}
\hline
\centering{$N$} & 64 & 63 & 62 & 61 & 60 & 59 & 58 & 57 & 56 & 55 & 54 & 53 & 52 & 51 \\
\hline \raggedright{\# of automata with \rl\ $N$}
&\raisebox{-11pt}{1} &\raisebox{-11pt}{0} &\raisebox{-11pt}{0}
&\raisebox{-11pt}{0} &\raisebox{-11pt}{0} &\raisebox{-11pt}{0}
&\raisebox{-11pt}{1} &\raisebox{-11pt}{2} &\raisebox{-11pt}{3}
&\raisebox{-11pt}{0} &\raisebox{-11pt}{0} &\raisebox{-11pt}{0}
&\raisebox{-11pt}{4} &\raisebox{-11pt}{4} \\
\hline
\end{tabular}}
\end{table}

Clearly, the unique automaton in the column corresponding to $N=64$ is nothing but the 9-automaton from the \v{C}ern\'{y} series. Then one
observes a gap: no 9-automata with two input letters have \rl\ in the range from 59 to 63. This gap was mentioned by
Trahtman~\cite{Tr06,Tr06a} who reported that for $n=7,8,9,10$ no $n$-automata with two input letters and \rl\ in the range from $n^2-3n+5$
to $n^2-2n$ were registered in his experiments. The gap is followed by an ``island'' consisting of three values attained by 6 automata, and
the ``island'' is followed by yet another gap. To the best of our knowledge, the second gap has not been reported in the literature up to
now.

The behavior just described---a unique extremal value followed by a gap which in turn is followed by a small ``island'' and yet another gap
---persists also for automata with a larger number of states. The size of the ``island'' depends only on the parity of the state number
and the sizes of the gaps grow as linear functions of the state number. A similar behavior is known for another value investigated in
discrete mathematics, namely, for the number of primitive digraphs with a fixed vertex number and a given exponent. In the next section we
recall these notions and discuss the similarity in more detail.

\section*{{\centerline{\large\bf \S3. Primitive digraphs and their exponents}}}

A \emph{digraph} (directed graph) is a pair $D=\langle
V,E\rangle$, where $V$ is a finite set and $E\subseteq V\times V$.
The elements of $V$ and $E$ are called \emph{vertices} and
respectively \emph{edges}. We notice that this definition allows
loops but excludes multiple edges. If $v,v'\in V$ and $e=(v,v')\in
E$, then we say that the edge $e$ \emph{starts} at $v$. We assume
that the reader is acquainted with the basic notions of digraph
theory such as directed path, directed cycle, isomorphism etc.

If $D=\langle V,E\rangle$ is a digraph, then its \emph{incidence
matrix} (referred to simply as \emph{matrix} in the sequel) is a
$V\times V$-matrix whose entry in row $v$ and column $v'$ is 1 if
$(v,v')\in E$ and is 0 otherwise. Conversely, to each $n\times
n$-matrix $P=(p_{ij})$ with non-negative real entries, one can
assign a digraph $D(P)$ on $\{1,2,\dots,n\}$ as the vertex set in
which the pair $(i,j)$ is an edge if and only if $p_{ij}>0$. This
correspondence between matrices and digraphs allows one to state
in the language of digraphs a number of important notions and
results from the classic theory of non-negative matrices (the
Perron--Frobenius theory).

A digraph $D=\langle V,E\rangle$ is said to be \emph{strongly
connected} if for every pair  $(v,v')\in V\times V$, there is a
directed path from $v$ to $v'$. A strongly connected digrpah $D$
is called \emph{primitive} if the greatest common divisor of the
lengths of the directed cycles of $D$ is equal to~1. In the
literature such digraphs are sometimes called \emph{aperiodic}.
Our choice of the name is justified by the fact that a digraph has
this property if and only if its matrix is primitive in the sense
of the Perron--Frobenius theory, that is the matrix has a positive
eigenvalue that is strictly greater than the absolute value of any
of its other elgenvalues.

The $t$-th \emph{power} of a digraph $D=\langle V,E\rangle$ is the digraph $D^t$ with the same vertex set $V$ in which a pair $(v,v')\in
V\times V$ is an edge if and only if $D$ has a directed path from $v$ to $v'$ of length precisely $t$. It is easy to see that if $M$ is the
matrix of $D$, then the digraph $D^t$ is isomorphic to the digraph $D(M^t)$, where $M^t$ is the usual $t$-th power of the matrix $M$. It is
known (see, e.g., \cite[p.\,224]{ST00}) that if $D$ is  a primitive digraph, then for some $t$ the power $D^t$ is a complete digraph (with
loops), that is, in $D^t$ each pair of vertices constitutes an edge. In matrix terms this means that each entry of the matrix $M^t$ is
positive. The least $t$ with this property is called the \emph{exponent} of the digraph $D$ and is denoted by $\gamma(D)$. Exponents of
digraphs have been intensively studied over the last 60 years and we refer to~\cite{Br} for a survey of results accumulated in this area.
In this paper we need only a few classic results collected in the following theorem. For brevity, in this theorem (and in the rest of the
paper) a digraph with $n$ vertices is called an $n$-\emph{digraph}.

\begin{theorem}
\label{dulmage} \emph{(a) (Wielandt's theorem, see
\cite{Wi50,DM62,DM64})} If $D$ is a primitive $n$-digraph, then
$\gamma(D)\le(n-1)^2+1$.

\emph{(b) \cite[Theorem~6 and Corollary 4]{DM64}} If $n>2$, then
up to isomorphism, there is exactly one primitive $n$-digraph $D$
with $\gamma(D)=(n-1)^2+1$, and exactly one with
$\gamma(D)=(n-1)^2$. The matrices of the digraphs are
\begin{equation}
\label{wielandt}
\begin{pmatrix}
0 & 1 & 0 & \dots & 0 & 0\\
0 & 0 & 1 & \dots & 0 & 0\\
\hdotsfor{6}\\
0 & 0 & 0 & \dots & 0 & 1\\
1 & 1 & 0 & \dots & 0 & 0
\end{pmatrix} \text{ and }
\begin{pmatrix}
0 & 1 & 0 & \dots & 0 & 0\\
0 & 0 & 1 & \dots & 0 & 0\\
\hdotsfor{6}\\
1 & 0 & 0 & \dots & 0 & 1\\
1 & 1 & 0 & \dots & 0 & 0
\end{pmatrix}
\text{ respectively.}
\end{equation}

\emph{(c) \cite[Theorem~7]{DM64}} If $n>4$ is even, then there is
no primitive $n$-digraph $D$ such that
$n^2-4n+6<\gamma(D)<(n-1)^2$, and, up to isomorphism, there are
either $3$ or $4$ primitive $n$-digraphs $D$ with
$\gamma(D)=n^2-4n+6$, according as $n$ is or is not a~multiple of
$3$.

\emph{(d) \cite[Theorem~8]{DM64}} If $n>3$ is odd, then there is
no primitive $n$-digraph $D$ such that
$n^2-3n+4<\gamma(D)<(n-1)^2$, and, up to isomorphism, there is
exactly one primitive $n$-digraph $D$ with $\gamma(D)=n^2-3n+4$,
exactly one with $\gamma(D)=n^2-3n+3$, and exactly two with
$\gamma(D)=n^2-3n+2$. The matrices of these digraphs are:
\begin{gather}
\label{odd island}
\begin{pmatrix}
0 & 1 & 0 & \dots & 0 & 0\\
0 & 0 & 1 & \dots & 0 & 0\\
\hdotsfor{6}\\
0 & 0 & 0 & \dots & 1 & 0\\
0 & 0 & 0 & \dots & 0 & 1\\
1 & 0 & 1 & \dots & 0 & 0
\end{pmatrix}\!, \quad
\begin{pmatrix}
0 & 1 & 0 & \dots & 0 & 0\\
0 & 0 & 1 & \dots & 0 & 0\\
\hdotsfor{6}\\
0 & 0 & 0 & \dots & 1 & 0\\
0 & 1 & 0 & \dots & 0 & 1\\
1 & 0 & 1 & \dots & 0 & 0
\end{pmatrix}\!,\\
\label{odd island1}
\begin{pmatrix}
0 & 1 & 0 & \dots & 0 & 0\\
0 & 0 & 1 & \dots & 0 & 0\\
\hdotsfor{6}\\
1 & 0 & 0 & \dots & 1 & 0\\
0 & 1 & 0 & \dots & 0 & 1\\
1 & 0 & 1 & \dots & 0 & 0
\end{pmatrix}\!,\quad
\begin{pmatrix}
0 & 1 & 0 & \dots & 0 & 0\\
0 & 0 & 1 & \dots & 0 & 0\\
\hdotsfor{6}\\
1 & 0 & 0 & \dots & 1 & 0\\
0 & 0 & 0 & \dots & 0 & 1\\
1 & 0 & 1 & \dots & 0 & 0
\end{pmatrix}\!.
\end{gather}

\emph{(e) \cite[Theorem~8]{DM64}} If $n>3$ is odd, then there is
no primitive $n$-digraph $D$ such that
$n^2-4n+6<\gamma(D)<n^2-3n+2$, and, up to isomorphism, there are
either $3$ or $4$ primitive $n$-digraphs $D$ with
$\gamma(D)=n^2-4n+6$, according as $n$ is or is not a~multiple of
$3$.
\end{theorem}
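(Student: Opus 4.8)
Since the statement merely collects classical facts of Wielandt and Dulmage--Mendelsohn, the plan is to derive them all from a single combinatorial engine: the reduction of the exponent to a uniform bound on \emph{walk lengths}. The starting observation is a monotonicity property: if $M$ is the matrix of a primitive digraph $D$ and $M^t$ is strictly positive, then so is $M^{t'}$ for every $t'\ge t$ (multiply by $M$ and use strong connectivity). Consequently $\gamma(D)$ equals the least $t$ for which \emph{every} ordered pair of vertices is joined by a directed walk of length exactly $t$. The whole theorem thus becomes a question about, for each pair $(u,v)$, the set $L(u,v)$ of lengths of $u$-to-$v$ walks, and about how quickly these sets become cofinite.

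The next step is to control $L(u,v)$ by the lengths of the cycles of $D$. Fix a shortest cycle $C$ of length $s$ (the girth). Any $u$-to-$v$ walk may be lengthened by looping around $C$ (after a short detour to reach $C$ and to leave it), so $L(u,v)$ contains an arithmetic-progression-like structure with common difference $s$; primitivity guarantees a second cycle whose length is not a multiple of $s$, so the missing residues fill in after a controlled delay. Quantifying this is exactly a coin-problem (Frobenius-number) estimate for the set of cycle lengths of $D$, and it yields the Dulmage--Mendelsohn bound
\[
\gamma(D)\le n+s(n-2),
\]
in which the term $s(n-2)$ is essentially the conductor of the relevant numerical semigroup. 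Part~(a) is then immediate: a digraph of girth $n$ is a single $n$-cycle, hence imprimitive, so primitivity forces $s\le n-1$; since the right-hand side grows with $s$, we get $\gamma(D)\le n+(n-1)(n-2)=(n-1)^2+1$, with the extremal case pinned to $s=n-1$ and to two cycles of the consecutive lengths $n-1$ and $n$.

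For the uniqueness statements and the finer counts in (b)--(e) I would run the analysis in reverse. Near-extremal exponents force $s$ to take one of just a few top values ($n-1$ for (a),(b); $n-2$ for the values near $n^2-3n+4$ and $n^2-4n+6$), and force $D$ to be a near-Hamiltonian skeleton carrying only a small number of extra chords. One then reads off the exponent directly from the Frobenius number of the resulting short list of cycle lengths. The \emph{gaps} appear because, as the chords vary, the achievable Frobenius thresholds jump over whole intervals of integers, so no admissible configuration realizes an exponent strictly inside a gap; the exact \emph{counts} come from classifying the chord placements up to the cyclic symmetry of the skeleton. Here the arithmetic enters: whether the two largest usable cycle lengths can be coprime depends on parity (for instance $n-2$ and $n$ are coprime precisely when $n$ is odd, which is why the odd and even cases separate), and the ``$3$ or $4$'' alternative keyed to divisibility by $3$ reflects when a further coincidence among cycle lengths makes two chord placements isomorphic.

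The main obstacle is plainly this last step. The upper bound and Wielandt's extremal case follow cleanly from the Frobenius estimate, but verifying that the gaps are \emph{genuinely empty} and that the non-isomorphic digraphs number exactly as claimed requires the delicate, essentially exhaustive case analysis of admissible chord patterns that is the technical heart of Dulmage and Mendelsohn's work; reproducing it in full is where all the real effort lies.
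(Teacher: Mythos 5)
This theorem is not proved in the paper at all: it is imported background, stated with explicit attributions---part (a) to Wielandt \cite{Wi50,DM62,DM64} and parts (b)--(e) to specific theorems of Dulmage and Mendelsohn \cite{DM64}---and the paper's subsequent arguments simply use it as a black box. So there is no internal proof to compare against; the question is whether your blind attempt actually constitutes a proof. It does not. Your reduction of the exponent to walk-length sets $L(u,v)$, the girth bound $s\le n-1$ for a primitive $n$-digraph, and the coin-problem estimate $\gamma(D)\le n+s(n-2)$ giving $(n-1)^2+1$ is indeed the classical route to part (a), and that portion of your sketch is sound. But parts (b)--(e)---the uniqueness of the two extremal digraphs, the exact gaps in the exponent set, and the counts ``$3$ or $4$'' according to divisibility by $3$---are precisely the content of the statement, and for these you offer only a program (``run the analysis in reverse,'' classify ``admissible chord patterns'') while conceding that the exhaustive case analysis ``is where all the real effort lies.'' A proof proposal that defers the technical heart of the claim to an analysis it does not carry out has a genuine gap, and it is the bulk of the theorem, not a corner case.

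One concrete error in the part you do sketch: you attribute the ``$3$ or $4$'' dichotomy to an isomorphism coincidence among chord placements when $3\mid n$. The actual mechanism in Dulmage--Mendelsohn is arithmetic rather than combinatorial: one of the candidate configurations has all its cycle lengths sharing the factor $3$ exactly when $n$ is a multiple of $3$ (e.g.\ cycle lengths $n$ and $n-3$ have $\gcd 3$ in that case), so that configuration fails to be \emph{primitive} and drops out of the count---it does not merge with another one by isomorphism. This matters because your proposed classification scheme would miscount if it only quotients by symmetry and never re-checks primitivity of each chord pattern.
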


In Table~\ref{9 states}, we compare the experimental data from
Table~\ref{automata} with the data that we can extract from
Theorem~\ref{dulmage}. Both digraphs and automata are counted up
to isomorphism.

\begin{table}[ht]
\tabcolsep=4.2pt \extrarowheight=1pt \caption{Exponents of
primitive 9-digraphs with $9$ vertices vs \rl{}s for 2-letter \sa\
with $9$ states} \label{9 states} {\small
\begin{tabular}{|p{2.9cm}||c|c|c|c|c|c|c|c|c|c|c|c|c|c|c|} \hline
\centering{$N$} & 65 & 64 & 63 & 62 & 61 & 60 & 59 & 58 & 57 & 56 & 55 & 54 & 53 & 52 & 51 \\
\hline \raggedright{\# of digraphs with exponent $N$} &
\raisebox{-6pt}{1} & \raisebox{-6pt}{1} & \raisebox{-6pt}{0} &
\raisebox{-6pt}{0} & \raisebox{-6pt}{0} & \raisebox{-6pt}{0} &
\raisebox{-6pt}{0} & \raisebox{-6pt}{1} & \raisebox{-6pt}{1} &
\raisebox{-6pt}{2}
& \raisebox{-6pt}{0} & \raisebox{-6pt}{0} & \raisebox{-6pt}{0} & \raisebox{-6pt}{0} & \raisebox{-6pt}{3} \\
\hline \raggedright{\# of automata with \rl\ $N$}
&\raisebox{-11pt}{0} &\raisebox{-11pt}{1} &\raisebox{-11pt}{0}
&\raisebox{-11pt}{0} &\raisebox{-11pt}{0} &\raisebox{-11pt}{0}
&\raisebox{-11pt}{0} &\raisebox{-11pt}{1} &\raisebox{-11pt}{2}
&\raisebox{-11pt}{3}
&\raisebox{-11pt}{0} &\raisebox{-11pt}{0} &\raisebox{-11pt}{0} &\raisebox{-11pt}{4} &\raisebox{-11pt}{4} \\
\hline
\end{tabular}}
\end{table}

There is an obvious similarity between the second and the third
rows of Table~\ref{9 states}. An analogous similarity is revealed
when one compares the data for other sizes of automata/digraphs.
We believe that the observed similarity is more than a mere
coincidence and that, in contrary, it reflects some profound and
perhaps yet hidden interconnections between primitive digraphs and
\sa. Some of such interconnections have been discovered in the
course of investigations related to the so-called Road Coloring
Problem. We recall notions involved there.

Given a DFA $\mathrsfs{A}=\langle Q,\Sigma,\delta\rangle$, its
\emph{digraph} $D(\mathrsfs{A})$ has $Q$ as the vertex set and
$(q,q')\in Q\times Q$ is an edge of $D(\mathrsfs{A})$ if and only
if $q'=\delta(q,a)$ for some $a\in\Sigma$. It is easy to see that
a digraph $D$ is isomorphic to the digraph of some DFA if and only
if each vertex of $D$ has at least one outgoing edge. In the
sequel, we always consider only digraphs satisfying this property.
Every DFA $\mathrsfs{A}$ such that $D\cong D(\mathrsfs{A})$ is
called a \emph{coloring} of $D$. Thus, every coloring of $D$ is
defined by assigning non-empty sets of labels (colors) from some
alphabet $\Sigma$ to edges of $D$ such that the label sets
assigned to the outgoing edges of each vertex form a partition of
$\Sigma$. Fig.\,\ref{fig:cerny} shows a digraph and two of its
colorings by $\Sigma=\{a,b\}$.

\begin{figure}[ht]
 \begin{center}
  \unitlength=2.8pt
    \begin{picture}(18,30)(-65,-4)
    \gasset{Nw=6,Nh=6,Nmr=3}
    \node(A1)(0,18){$1$}
    \node(A2)(18,18){$2$}
    \node(A3)(18,0){$3$}
    \node(A4)(0,0){$4$}
    \drawloop[loopangle=135](A1){$a$}
    \drawloop[loopangle=45](A2){$b$}
    \drawloop[loopangle=-45](A3){$b$}
    \drawedge(A1,A2){$b$}
    \drawedge(A2,A3){$a$}
    \drawedge(A3,A4){$a$}
    \drawedge(A4,A1){$a,b$}
    \end{picture}
 \begin{picture}(18,30)(0,-4)
    \gasset{Nw=6,Nh=6,Nmr=3}
    \node(A1)(0,18){$1$}
    \node(A2)(18,18){$2$}
    \node(A3)(18,0){$3$}
    \node(A4)(0,0){$4$}
    \drawloop[loopangle=135](A1){$a$}
    \drawloop[loopangle=45](A2){$a$}
    \drawloop[loopangle=-45](A3){$a$}
    \drawedge(A1,A2){$b$}
    \drawedge(A2,A3){$b$}
    \drawedge(A3,A4){$b$}
    \drawedge(A4,A1){$a,b$}
    \end{picture}
 \begin{picture}(18,30)(65,-4)
    \gasset{Nw=6,Nh=6,Nmr=3}
    \node(A1)(0,18){$1$}
    \node(A2)(18,18){$2$}
    \node(A3)(18,0){$3$}
    \node(A4)(0,0){$4$}
    \drawloop[loopangle=135](A1){}
    \drawloop[loopangle=45](A2){}
    \drawloop[loopangle=-45](A3){}
    \drawedge(A1,A2){}
    \drawedge(A2,A3){}
    \drawedge(A3,A4){}
    \drawedge(A4,A1){}
    \end{picture}
 \end{center}
 \caption{A digraph and two of its colorings}
 \label{fig:cerny}
\end{figure}
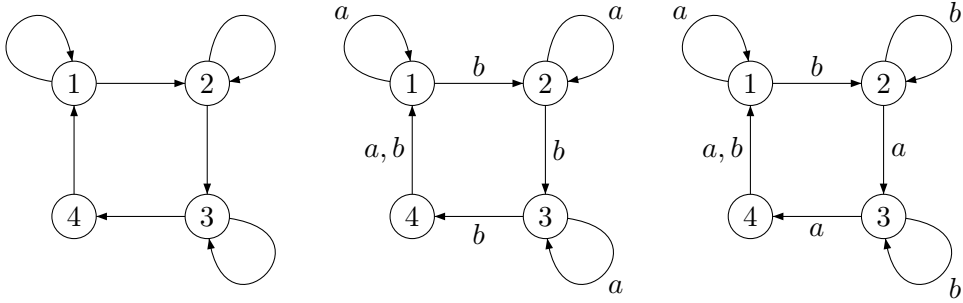

In 1977 Adler, Goodwyn, and Weiss~\cite{AGW} observed that the
digraph of every \scn\ \san\ is primitive and conjectured that
every primitive digraph has a synchronizing coloring. This
conjecture, known as the Road Coloring Conjecture,  has been
recently proved by Trahtman~\cite{Tr09}. It is easy to relate the
\rl\ of a \scn\ \san\ with the exponent of its digraph.

\begin{prop}
\label{lower bound} Let $\mathrsfs{A}=\langle
Q,\Sigma,\delta\rangle$ be a \scn\ synchronizing $n$-automaton
with \rl\ $t$. Then
\begin{equation}
\label{eq:lower bound} \gamma(D(\mathrsfs{A}))\le t+n-1.
\end{equation}
\end{prop}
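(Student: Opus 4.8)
The plan is to unfold the definition of the exponent and argue by direct construction of walks of a prescribed length. Recall that $\gamma(D(\mathrsfs{A}))\le t+n-1$ means precisely that the power $D(\mathrsfs{A})^{t+n-1}$ is a complete digraph (with loops), i.e.\ that for \emph{every} ordered pair $(v,v')$ of states there is a directed path of length \emph{exactly} $t+n-1$ from $v$ to $v'$. So I would fix an arbitrary pair $(v,v')$, allowing $v=v'$, and produce such a path; this suffices, since $\gamma(D(\mathrsfs{A}))$ is by definition the least exponent for which all pairs are joined.

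First I would record the two ingredients. Let $w$ be a \sw\ of length $t$, so there is a state $p$ with $\delta(q,w)=p$ for \emph{all} $q\in Q$. Since $\mathrsfs{A}$ is \scn, its digraph has a directed path from $p$ to $v'$; choosing a shortest one, its length $\ell$ satisfies $0\le\ell\le n-1$, and I label it by a word $u_2$ with $\delta(p,u_2)=v'$ and $|u_2|=\ell$. Then the word $wu_2$ sends \emph{every} state to $v'$ and has length $t+\ell$. The key step is now to pad this word up to the exact target length without disturbing its effect: I prepend an arbitrary word $u_1$ of length $n-1-\ell\ge0$, which exists because the transition function is total (e.g.\ applying any single letter $n-1-\ell$ times traces a walk of that length out of $v$). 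Setting $z=u_1wu_2$, its length is $(n-1-\ell)+t+\ell=t+n-1$, and $\delta(v,z)=\delta(p,u_2)=v'$, since the reset word $w$ absorbs the prefix $u_1$ entirely. The path traced by $z$ starting at $v$ is therefore a directed path of length exactly $t+n-1$ from $v$ to $v'$.

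As the pair $(v,v')$ was arbitrary, every ordered pair of vertices of $D(\mathrsfs{A})$ is joined by a path of length exactly $t+n-1$, so $D(\mathrsfs{A})^{t+n-1}$ is complete and hence $\gamma(D(\mathrsfs{A}))\le t+n-1$, as required. The one point that looks like an obstacle is matching the length \emph{exactly} rather than merely obtaining some path of length at most $t+n-1$: the naive route of resetting to $p$ and then walking to $v'$ yields length $t+\ell$, which is generally too short, and one cannot in general reach $v'$ from $p$ in precisely $n-1$ steps. The trick that removes this obstacle is to place the surplus $n-1-\ell$ steps \emph{before} the reset word, where they are harmlessly swallowed by $w$; after that the proof rests only on the two routine facts that a shortest path in an $n$-vertex \scn\ digraph has length at most $n-1$ and that walks of any prescribed length issue from every vertex.
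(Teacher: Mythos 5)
Your proof is correct and follows essentially the same argument as the paper's: both reset to the state $p$, take a shortest path from $p$ to the target (of length $\ell\le n-1$), and place the $n-1-\ell$ padding steps \emph{before} the reset word so that $w$ absorbs them, yielding a path of length exactly $t+n-1$. The only difference is presentational: you phrase the construction via an explicit word $z=u_1wu_2$, while the paper traces the same walk directly in the digraph.
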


\begin{proof}
Let $w\in\Sigma^*$ be a \sw\ of length $t$ and let $p$ be such that $\delta(q,w)=p$ for each $q\in Q$. Now we take an arbitrary pair
$(q',q'')\in Q\times Q$ and construct a directed path from $q'$ to $q''$ of length precisely  $t+n-1$ in the digraph $D(\mathrsfs{A})$.
Since the digraph $D(\mathrsfs{A})$ is strongly connected, it has a directed path from $p$ to $q''$. Let $\ell$ be the minimum length of
such a path. Since the path of minimum length visits no vertex twice, $\ell\le n-1$. Now we consider an arbitrary directed path of length
$n-1-\ell$ starting with $q'$. From the endpoint of this path, we walk along the directed path of length $t$ labelled by the word $w$ in
the automaton $\mathrsfs{A}$. Since $w$ is a reset word, this path necessarily ends with $p$. It remains to walk along the directed path of
length $\ell$ from $p$ to $q''$ in order to obtain a directed path from $q'$ to $q''$ of length $(n-1-\ell)+t+\ell=t+n-1$.
\end{proof}

Thus, colorings of primitive digraphs with large exponents yield
automata with large \rl{}s. This observation is not yet sufficient
to completely explain the similarity between the rows of
Table~\ref{9 states} since many automata corresponding to non-zero
entries in the third row cannot be obtained as colorings of
primitive digraphs with large exponents. In Section~4 we present
yet another way to produce slowly \sa\ from primitive digraphs.

\section*{{\centerline{\large\bf \S4. Series of slowly \sa}}}

\paragraph*{4.1. Overview.} First of all, we would like to discuss what kind
of automata we are interested in, in other words, what is the
precise meaning of the expressions like ``slowly \sa'' or
``automata whose \rl\ is close to the square of the state
number''. As mentioned in Section~1, the \rl\ of an $n$-automaton
can reach $(n-1)^2$. The corresponding example (discovered by
\v{C}ern\'y~\cite{Ce64}) is the automaton $\mathrsfs{C}_n=\langle
\{1,2,\dots,n\},\{a,b\},\delta\rangle$ where the letters $a$ and
$b$ act as follows:
$$\delta(i,a)=\begin{cases}
i &\text{if } i<n,\\
1 &\text{if } i=n;
\end{cases}\quad
\delta(i,b)=\begin{cases}
i+1 &\text{if } i<n,\\
1 &\text{if } i=n.
\end{cases}$$
The automaton $\mathrsfs{C}_n$ is shown in Fig.\,\ref{fig:cerny-n}(left). It is easy to see that if one adds a new state $q_0$ to the
automaton $\mathrsfs{C}_{n-1}$ and then defines the action of the letters $a$ and $b$ at this added state in all possible ways such that at
least one of the letters does not fix $q_0$, then one gets $n^2-1$ non-isomorphic initially connected $n$-automata with \rl{}s between
$(n-2)^2$ and $(n-2)^2+1$. In a similar way one can ``multiply'' other $(n-1)$-automata whose \rl\ is close to $(n-2)^2$, thus obtaining
families of $n$-automata.

Since we want to avoid considering such more or less trivial
modifications, we focus on $n$-automata whose \rl{}s are between
$(n-2)^2+2$ and $(n-1)^2$. (This explains, in particular, our
choice of the range of \rl{}s in Tables~\ref{automata} and~\ref{9
states}.)

Our experiments show that the number of \sa\ in this range is not large and that their distribution with respect to the possible values of
\rl\ clearly reveals the following pattern: an isolated extreme value---a gap---a small ``island''---another gap
---a ``continent'', see Table~\ref{automata} and the discussion at
the end of Section~2. We shall show that departing from primitive
digraph with large exponent presented in Theorem~\ref{dulmage},
the following series of automata over 2-letter alphabet can be
constructed:
\begin{itemize}
\item the series $\mathrsfs{C}_n$ that corresponds to the observed extreme value;
\item the series $\mathrsfs{W}_n$, $\mathrsfs{D}'_n$, $\mathrsfs{D}''_n$,
$\mathrsfs{E}_n$, and for odd $n$ also the series $\mathrsfs{B}_n$
and $\mathrsfs{F}_n$ that correspond to all observed ``island''
values;
\item the series $\mathrsfs{G}_n$ (for odd $n$) and $\mathrsfs{H}_n$ that
correspond to the maximum observed ``continental'' values.
\end{itemize}
Fig.~\ref{fig:relation} demonstrates which series of digraphs give
rise to the series of automata just listed. (For clarity, digraph
series in Fig.~\ref{fig:relation} are presented by their ``icons''
rather than matrices from Theorem~\ref{dulmage}.) A solid arrow
from an icon to a letter denoting a series of automata means that
automata in the series are colorings of the corresponding
digraphs; a dotted arrow  indicates another way of producing
automata from digraphs. Dotted frames embrace series of automata
with the same value of \rl.
\begin{figure}[h]
\begin{center}
\includegraphics[width=12cm]{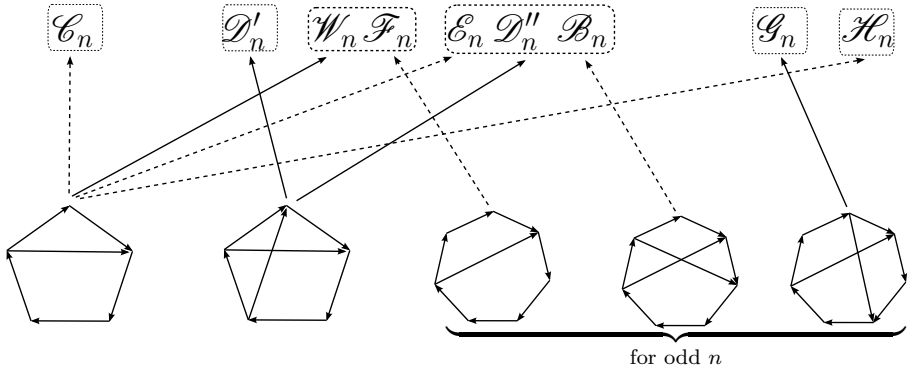}
\unitlength 0.86mm
\begin{picture}(1,1)(2,0)
\put(-132,45){\Large{$\mathrsfs{C}_n$}}
\put(-106,45){\Large{$\mathrsfs{D}'_n$}}
\put(-92,45){\Large{$\mathrsfs{W}_n$}}
\put(-84,45){\Large{$\mathrsfs{F}_n$}}
\put(-71,45){\Large{$\mathrsfs{E}_n$}}
\put(-64,45){\Large{$\mathrsfs{D}''_n$}}
\put(-54,45){\Large{$\mathrsfs{B}_n$}}
\put(-23,45){\Large{$\mathrsfs{G}_n$}}
\put(-10,45){\Large{$\mathrsfs{H}_n$}}
\put(-71,1){$\underbrace{\rule{6.1cm}{0pt}}_{\text{for odd $n$}}$}
\end{picture}
\caption{Connections between series of primitive digraphs with
large exponents and series of slowly \sa} \label{fig:relation}
\end{center}
\end{figure}

Thus, we are going to establish a number of results about automata
series listed in Fig.~\ref{fig:relation}. (Two of these
results---namely, the two concerning the series $\mathrsfs{C}_n$
and $\mathrsfs{B}_n$---are known in the literature but our proofs
are essentially novel.) The results are divided into 5 groups in
the accordance with the ``origin'' of the series, that is,
according to the type of digraphs that give rise to automata in
the series.

\paragraph*{4.2. Automata related to digraphs of the series $W_n$.} The digraph
$W_n$ is the $n$-digraph with largest exponent that corresponds to
the first matrix in~\eqref{wielandt}. If we denote the vertices of
$W_n$ by $1,2,\dots,n$, then its edges are $(n,1)$, $(n,2)$ and
$(i,i+1)$ for $i=1,\dots,n-1$. It is easy to see that up to
isomorphism and renaming of letters, there is a unique coloring of
the digraph $W_n$ with two letters. We denote the resulting
automaton by $\mathrsfs{W}_n$. The digraph $W_n$ and the automaton
$\mathrsfs{W}_n$ are shown in Fig.\,\ref{fig:anan}.

\begin{figure}[thb]
\begin{center}
\unitlength .5mm
\begin{picture}(72,60)(20,-67)
\gasset{Nw=16,Nh=16,Nmr=8} \node(n0)(36.0,-16.0){1} \node(n1)(4.0,-40.0){$n$}
\node(n2)(68.0,-40.0){2} \node(n3)(16.0,-72.0){$n{-}1$}
\node(n4)(56.0,-72.0){3} \drawedge[ELdist=2.0](n1,n0){}
\drawedge[ELdist=1.5](n2,n4){} \drawedge[ELdist=1.7](n0,n2){}
\drawedge[ELdist=1.7](n3,n1){} \drawedge[ELdist=2.0](n1,n2){}
\put(31,-73){$\dots$}
\end{picture}
\begin{picture}(72,60)(-20,-67)
\gasset{Nw=16,Nh=16,Nmr=8}
\node(n0)(36.0,-16.0){1}
\node(n1)(4.0,-40.0){$n$} \node(n2)(68.0,-40.0){2}
\node(n3)(16.0,-72.0){$n{-}1$} \node(n4)(56.0,-72.0){3}
\drawedge[ELdist=2.0](n1,n0){$b$} \drawedge[ELdist=1.5](n2,n4){$a, b$}
\drawedge[ELdist=1.7](n0,n2){$a, b$}
\drawedge[ELdist=1.7](n3,n1){$a, b$} \drawedge[ELdist=2.0](n1,n2){$a$}
\put(31,-73){$\dots$}
\end{picture}
\end{center}
\caption{The digraph $W_n$ and the automaton
$\mathrsfs{W}_n$}\label{fig:anan}
\end{figure}
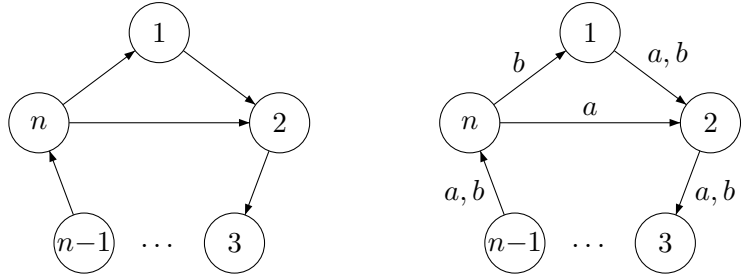

\begin{theorem}
\label{theorem:anan} The automaton $\mathrsfs{W}_n$ is
synchronizing and its \rl\ is equal to $n^2-3n+3$.
\end{theorem}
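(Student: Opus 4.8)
The plan is to prove separately the two inequalities $t\ge n^2-3n+3$ and $t\le n^2-3n+3$, where $t$ denotes the \rl\ of $\mathrsfs{W}_n$. The lower bound comes almost for free from the machinery already assembled. Indeed, $\mathrsfs{W}_n$ is a coloring of $W_n$, so $D(\mathrsfs{W}_n)=W_n$, and this digraph is strongly connected (the letter $b$ alone traces the cycle $1\to2\to\cdots\to n\to1$). Hence, once we know that $\mathrsfs{W}_n$ is synchronizing (which the explicit word below will establish), Proposition~\ref{lower bound} applies and yields $\gamma(W_n)\le t+n-1$. Since $W_n$ realizes the first matrix in~\eqref{wielandt}, Theorem~\ref{dulmage}(b) gives $\gamma(W_n)=(n-1)^2+1$, so that
\[
t\ \ge\ (n-1)^2+1-(n-1)\ =\ n^2-3n+3.
\]
Thus the whole problem reduces to producing a single reset word of length $n^2-3n+3$.

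For the upper bound I would exhibit the explicit word $w=a\,(b^{\,n-2}a)^{\,n-2}$, of length $1+(n-2)(n-1)=n^2-3n+3$, and verify that it is a \sw\ by tracking the image set $\delta(\{1,\dots,n\},\cdot)$ block by block. The key structural observation is that $b$ acts as the full cyclic shift ($i\mapsto i+1$ for $i<n$, and $n\mapsto1$), while $a$ agrees with this shift except that $a(n)=2$; consequently $a$ decreases the size of a set $S$, by exactly one, precisely when $\{1,n\}\subseteq S$, and otherwise both letters act bijectively. The invariant I would maintain is that the image stays a \emph{contiguous arc} of the cycle: after the leading $a$ the image is the arc $\{2,\dots,n\}$ of length $n-1$, and each block $b^{\,n-2}a$ shortens the current arc by one while restoring the normalized form $\{2,3,\dots,L+1\}$. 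Concretely, $b^{\,n-2}$ rotates such an arc so that its left endpoint moves from $2$ to $n$, turning it into $\{n,1,2,\dots,L-1\}$; since this arc contains both $1$ and $n$, the following $a$ collapses it to $\{2,\dots,L\}$, again an arc of length $L-1$ anchored at $2$. Iterating $n-2$ times drives $L$ from $n-1$ down to $1$, leaving the singleton $\{2\}$, so $w$ does synchronize $\mathrsfs{W}_n$.

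The delicate point, and the step I expect to be the main obstacle, is pinning down the exact rotation length and checking that the arc invariant is genuinely preserved. A naive choice of rotation---collapsing at the first moment the image happens to contain both $1$ and $n$---fails: it can leave the image as a non-arc, for instance a pair of states at cyclic distance two, which $a$ can never collapse, since $a$ only ever merges the adjacent pair $\{n,1\}$. The substance of the argument is therefore the bookkeeping showing that rotating so that state $2$ is carried exactly onto state $n$ (i.e.\ applying $b^{\,n-2}$, and not fewer copies of $b$) is what keeps every intermediate image a contiguous arc and every application of $a$ genuinely size-reducing. Once this is in place, the two matching bounds give $t=n^2-3n+3$.
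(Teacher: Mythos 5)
Your proposal is correct and matches the paper's proof essentially verbatim: the lower bound is obtained exactly as in the paper from Proposition~\ref{lower bound} together with Theorem~\ref{dulmage}(b), and your word $a(b^{n-2}a)^{n-2}$ is literally the same word as the paper's $(ab^{n-2})^{n-2}a$, just written with a shifted factorization. The only difference is that you spell out the arc-invariant bookkeeping that the paper dismisses as ``easy to see,'' which is a fine (and correct) elaboration rather than a different route.
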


\begin{proof}
It is easy to see that the word $(ab^{n-2})^{n-2}a$ resets the
automaton $\mathrsfs{W}_n$. The length of this word is equal to
$(n-1)(n-2)+1=n^2-3n+3$. On the other hand,
Theorem~\ref{dulmage}(b) and Proposition~\ref{lower bound} imply
that the \rl\ of $\mathrsfs{W}_n$ cannot be less than меньше чем
$((n-1)^2+1)-(n-1)=n^2-3n+3$.
\end{proof}

The series $\mathrsfs{W}_n$ was discovered by the first author in~2008. His original proof of Theorem~\ref{theorem:anan} relied on a
game-theoretic technique from~\cite{AVZ} and was rather difficult.

Now we show that also the automata in the series $\mathrsfs{C}_n$
are tightly related to the digraphs in the series $W_n$ though the
relation is less obvious. We notice that even though the automata
$\mathrsfs{C}_n$ have been known for about 50 years and have been
rediscovered several times, to the best of our knowledge, their
relationship to the digraphs in the series $W_n$ has not been
observed previously.
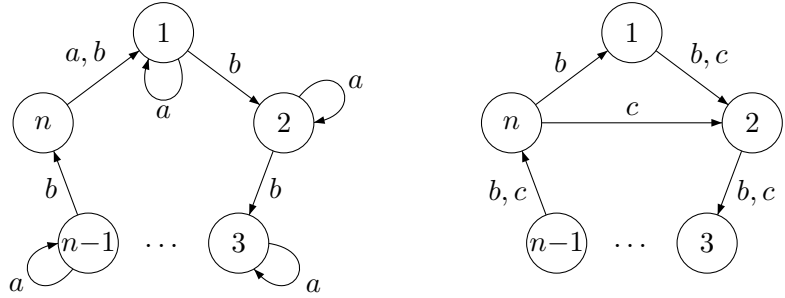
\begin{figure}[bth]
\begin{center}
\unitlength .5mm
\begin{picture}(72,66)(25,-76)
\gasset{Nw=16,Nh=16,Nmr=8,loopdiam=10} \node(n0)(36.0,-16.0){1}
\node(n1)(4.0,-40.0){$n$} \node(n2)(68.0,-40.0){2}
\node(n3)(16.0,-72.0){$n{-}1$} \node(n4)(56.0,-72.0){3}
\drawedge[ELdist=2.0](n1,n0){$a,b$} \drawedge[ELdist=1.5](n2,n4){$b$}
\drawedge[ELdist=1.7](n0,n2){$b$} \drawedge[ELdist=1.7](n3,n1){$b$}
\drawloop[ELdist=1.5,loopangle=30](n2){$a$}
\drawloop[ELdist=2.4,loopangle=-30](n4){$a$}
\drawloop[ELdist=1.5,loopangle=-90](n0){$a$}
\drawloop[ELdist=1.5,loopangle=210](n3){$a$} \put(31,-73){$\dots$}
\end{picture}
\begin{picture}(72,66)(-25,-76)
\gasset{Nw=16,Nh=16,Nmr=8} \node(n0)(36.0,-16.0){1} \node(n1)(4.0,-40.0){$n$}
\node(n2)(68.0,-40.0){2} \node(n3)(16.0,-72.0){$n{-}1$}
\node(n4)(56.0,-72.0){3} \drawedge[ELdist=2.0](n1,n0){$b$}
\drawedge[ELdist=1.5](n2,n4){$b,c$} \drawedge[ELdist=1.7](n0,n2){$b,c$}
\drawedge[ELdist=1.7](n3,n1){$b,c$} \drawedge[ELdist=2.0](n1,n2){$c$}
\put(31,-73){$\dots$}
\end{picture}
\end{center}
\caption{The automaton $\mathrsfs{C}_n$ and the automaton defined
by the action of the words $b$ and $c=ab$}\label{fig:cerny-n}
\end{figure}

We give a new simple proof of \v{C}ern\'y's classical result.
\begin{theorem}[\!\!{\mdseries\cite[Lemma~1]{Ce64}}]
\label{theorem:cerny} The automaton $\mathrsfs{C}_n$ is
synchronizing and its \rl\ is equal to $(n-1)^2$.
\end{theorem}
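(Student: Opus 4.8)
The plan is to prove the two bounds separately, using as the central device the observation recorded in the caption of Fig.~\ref{fig:cerny-n}: in $\mathrsfs{C}_n$ the words $b$ and $c=ab$ act exactly as the two letters of $\mathrsfs{W}_n$. Indeed, $\delta(i,ab)=i+1$ for $i<n$ and $\delta(n,ab)=\delta(1,b)=2$, so $c$ realizes the transition $a$ of $\mathrsfs{W}_n$, while $b$ is the full cycle sending $n\mapsto1$, which is the transition $b$ of $\mathrsfs{W}_n$. Hence the transformations of $\{1,\dots,n\}$ generated by $b$ and $c$ coincide with those of $\mathrsfs{W}_n$, and every fact about $\mathrsfs{W}_n$ — in particular its \rl\ $n^2-3n+3$ from Theorem~\ref{theorem:anan} — becomes available for $\mathrsfs{C}_n$ once a word is re-encoded in the alphabet $\{b,c\}$.

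For the upper bound I would simply exhibit a \sw. Consider $(ab^{n-1})^{n-2}a$, of length $n(n-2)+1=(n-1)^2$. Tracing its action, the leading $a$ collapses $\{1,\dots,n\}$ to $\{1,\dots,n-1\}$; each subsequent block $b^{n-1}a$ rotates the current interval by $-1$ so that it again meets both $1$ and $n$, and the following $a$ merges that pair, dropping the cardinality by one. After the $(n-1)$-st occurrence of $a$ a single state remains. (Equivalently, this word is the image under $a\mapsto c=ab$, $b\mapsto b$ of the \sw\ $(ab^{n-2})^{n-2}a$ of $\mathrsfs{W}_n$ used in Theorem~\ref{theorem:anan}, with the redundant trailing $b$ deleted.)

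The heart of the matter is the matching lower bound $\rl\ge(n-1)^2$, and here the plan is word surgery followed by an appeal to $\mathrsfs{W}_n$. Let $w$ be a \ssw\ of length $t$. Since $b$ is a permutation fixing the whole set $Q$ and $a^2=a$ as a transformation, minimality forces $w$ to begin and end with $a$ and to contain no factor $aa$; thus $w=ab^{i_1}ab^{i_2}\cdots ab^{i_{k-1}}a$ with all $i_j\ge1$, where $k$ is the number of occurrences of $a$. Because $a$ lowers the cardinality of an image by at most one (it merges only $1$ and $n$) while $b$ preserves it, collapsing $Q$ to a singleton needs at least $n-1$ merges, so $k\ge n-1$. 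Now rewrite each block $ab^{i_j}=cb^{i_j-1}$ using $ab=c$, and the terminal letter via $a=cb^{n-1}$ (valid since $b^n=\mathrm{id}$); then $w$ realizes the same transformation as a word over $\{b,c\}$, and after stripping its redundant terminal $b^{n-1}$ one obtains $w''=(cb^{i_1-1})\cdots(cb^{i_{k-1}-1})c$, which still resets $\mathrsfs{W}_n$. Its length is $k+\sum_{j=1}^{k-1}(i_j-1)=t-k+1$, so Theorem~\ref{theorem:anan} gives $t-k+1\ge n^2-3n+3$; combined with $k\ge n-1$ this yields $t\ge n^2-3n+2+k\ge(n-1)^2$, as required.

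The step I expect to be most delicate is this re-encoding. One must check that the three normalizations of $w$ lose nothing, that the resulting transformation really is expressible over $\{b,c\}$ (this is exactly where $b^n=\mathrm{id}$ and $a^2=a$ enter), and above all that the superfluous trailing $b^{n-1}$ produced by $a=cb^{n-1}$ is removed \emph{before} invoking Theorem~\ref{theorem:anan}: retaining it would inflate the length of $w''$ to $t-k+n$ and weaken the conclusion to $\rl\ge(n-1)(n-2)$, losing precisely the linear term. The careful bookkeeping of the count of $c$'s against that of $b$'s in $w''$, together with $k\ge n-1$, is what supplies the factor that closes this gap.
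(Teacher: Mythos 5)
Your proof is correct and takes essentially the same route as the paper: the identical word $(ab^{n-1})^{n-2}a$ for the upper bound, and for the lower bound the same re-encoding of a minimal reset word over $\{b,\,c=ab\}$, identification of the resulting two-letter automaton with $\mathrsfs{W}_n$, appeal to Theorem~\ref{theorem:anan}, and the count of at least $n-1$ occurrences of $a$. The only difference is presentational: the paper packages the re-encoding as the reusable Proposition~\ref{prop:simple idempotent} (handling the terminal $a$ by appending $c$ to the rewritten word rather than via your substitution $a=cb^{n-1}$ followed by stripping the trailing $b$'s), whereas you inline the same bookkeeping directly for $\mathrsfs{C}_n$.
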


\begin{proof}
It is easy to see that the word $(ab^{n-1})^{n-2}a$ resets the
automaton $\mathrsfs{C}_n$. The length of this word is equal to
$n(n-2)+1=(n-1)^2$.

Now we invoke the following observation that will be used in some
other proofs as well.

\begin{prop}
\label{prop:simple idempotent} Let $\mathrsfs{A}=\langle
Q,\{a,b\},\delta\rangle$ be a synchronizing $n$-automaton with
\rl\ $t$ in which the letter $a$ fixes all but one states and the
letter $b$ acts as a permutation of the set $Q$. Consider the
automaton $\mathrsfs{B}=\langle Q,\{b,c\},\zeta\rangle$ in which
$\zeta(q,b)=\delta(q,b)$ and $\zeta(q,c)=\delta(q,ab)$ for all
$q\in Q$. Then the automaton $\mathrsfs{B}$ is synchronizing and
its \rl\ does not exceed $t-n+2$.
\end{prop}

\begin{proof}
Let $w$ be a \sw\ of the automaton $\mathrsfs{A}$ of length $t$. Since the letter $b$ acts as a permutation of the set $Q$, the word $w$
cannot end with $b$ for otherwise we could obtain a shorter \sw\ by removing the last letter of $w$. Thus, $w=w'a$ for some word
$w'\in\{a,b\}^*$. Let $q_1\in Q$ be the unique state which is not fixed by the letter $a$ and let $q_2=\delta(q_1,a)$. The minimality of
the length of the word $w$ implies that the image of the set $Q$ under the action of the word $w'$ is equal to $\{q_1,q_2\}$.

Since the word $a^2$ acts on $Q$ in the same way as the letter $a$, this word cannot occur in $w$ as a factor for otherwise we obtain a
shorter \sw\ by substituting the occurrence of $a^2$ in $w$ by $a$. Therefore each occurrence of $a$ in $w$, except the last one, is
followed by an occurrence of the letter $b$. Hence the word $w'$ can be written as a word in the generators $b$ and $ab$. Now we substitute
each occurrence of the factor $ab$ in $w'$ by an occurrence of the letter $c$ so that we rewrite $w'$ into a word $v$ over the alphabet
$\{b,c\}$. Since the words $w'$ and $v$ act on the set $Q$ in the same way, $vc$ is a \sw\ for the automaton $\mathrsfs{B}$. Thus,
$\mathrsfs{B}$ is a \san; let $s$ be its \rl.

Since $b$ only permutes the states and each application of $c$ can
send to one state only one pair of states, the word $vc$ that
sends all states to a single state must contain at least $n-1$
occurrences of $c$. The length of $v$ as a word over $\{b,c\}$ is
not less than  $s-1$ and $v$ contains at least $n-2$ occurrences
of $c$. Each occurrence of $c$ in $v$ corresponds to an occurrence
of the factor $ab$ in $w'$, whence we conclude that the word $w'$
has length at least $(s-1)+(n-2)$. Since the length of the word
$w=w'a$ is equal to $t$, we obtain $t-1\ge (s-1)+(n-2)$, whence
$s\le t-n+2$.
\end{proof}

Observe that in the sequel we will often use modifications of a given automaton $\mathrsfs{A}$ in the flavor of
Proposition~\ref{prop:simple idempotent}. In such modifications, we consider a new automaton on the same state set but with input letters
$c_1$ and $c_2$ whose actions are defined by the actions of some words $w_1$ and $w_2$ respectively in the automaton $\mathrsfs{A}$.
Slightly abusing terminology, we refer to the automaton arising this way as the automaton \emph{defined by the actions of the words
$c_1=w_1$ and $c_2=w_2$}.

We return to the proof of Theorem~\ref{theorem:cerny}. It is easy to see that for the  automaton $\mathrsfs{C}_n$, the automaton defined by
the actions of the words $b$ and $c=ab$ is isomorphic to the automaton $\mathrsfs{W}_n$, see Fig.\,\ref{fig:cerny-n}(right). By
Theorem~\ref{theorem:anan}, the \rl\ of $\mathrsfs{W}_n$ is $n^2-3n+3$. Applying Proposition~\ref{prop:simple idempotent}, we conclude that
the \rl\ of $\mathrsfs{C}_n$ cannot be less than $(n^2-3n+3)+(n-2)=n^2-2n+1=(n-1)^2$.
\end{proof}

The next series in the family of automata related to the digraph $W_n$ consists of the automata
$\mathrsfs{E}_n=\langle\{1,2,\dots,n\},\{a,b\},\delta\rangle$, where the letter $a$ and $b$ act as follows:
$$\delta(i,a)=\begin{cases}
2 &\text{if } i = 1,\\
3 &\text{if } i = 2,\\
i &\text{if } i>2;
\end{cases}\quad
\delta(i,b)=\begin{cases}
i+1 &\text{if } i<n,\\
1 &\text{if } i=n.
\end{cases}$$
The automaton $\mathrsfs{E}_n$ is shown in Fig.\,\ref{fig:e-n}
(left).
\begin{figure}[thb]
\begin{center}
\unitlength .45mm
\begin{picture}(72,66)(25,-76)
\gasset{Nw=16,Nh=16,Nmr=8,loopdiam=10} \node(n0)(36.0,-16.0){2}
\node(n1)(4.0,-40.0){$1$} \node(n2)(68.0,-40.0){3} \node(n3)(16.0,-72.0){$n$}
\node(n4)(56.0,-72.0){4} \drawedge[ELdist=2.0](n1,n0){$a$}
\drawedge[ELdist=1.5](n2,n4){$b$} \drawedge[ELdist=1.7](n0,n2){$a,b$}
\drawedge[ELdist=1.7](n3,n1){$b$} \drawedge[ELdist=1.7](n1,n2){$b$}
\drawloop[ELdist=1.5,loopangle=30](n2){$a$}
\drawloop[ELdist=2.4,loopangle=-30](n4){$a$}
\drawloop[ELdist=1.5,loopangle=210](n3){$a$} \put(31,-73){$\dots$}
\end{picture}
\begin{picture}(72,66)(-25,-76)
\gasset{Nw=16,Nh=16,Nmr=8,loopdiam=10} \node(n0)(36.0,-16.0){2}
\node(n1)(4.0,-40.0){1} \node(n2)(68.0,-40.0){3} \node(n3)(16.0,-72.0){$n$}
\node(n4)(56.0,-72.0){4} \drawedge[ELdist=1.5](n2,n4){$b$}
\drawedge[ELdist=1.7](n0,n2){$b,c$} \drawedge[ELdist=1.7](n3,n1){$b$}
\drawedge[ELdist=2.0](n1,n2){$b,c$} \drawloop[ELdist=1.5,loopangle=30](n2){$c$}
\drawloop[ELdist=2.4,loopangle=-30](n4){$c$}
\drawloop[ELdist=1.5,loopangle=210](n3){$c$} \put(31,-73){$\dots$}
\end{picture}
\end{center}
\caption{The automaton $\mathrsfs{E}_n$ and the automaton defined
by the actions of the words $b$ and $c=aa$}\label{fig:e-n}
\end{figure}
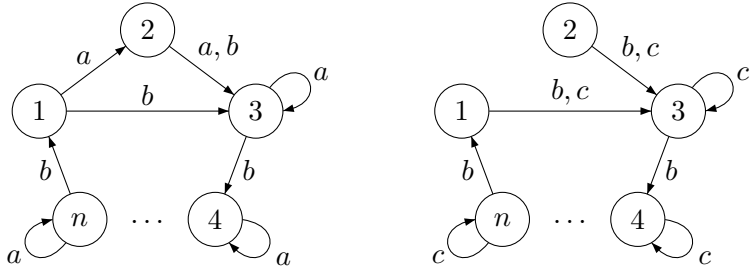

\begin{theorem}
\label{thm:e-n} The automaton  $\mathrsfs{E}_n$ is synchronizing,
and its \rl\ is equal to $n^2-3n+2$.
\end{theorem}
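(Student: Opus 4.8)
The plan is to establish the matching upper and lower bounds on the \rl\ separately, following the scheme already used for $\mathrsfs{W}_n$ and $\mathrsfs{C}_n$. For the upper bound I would exhibit a single \sw. As in $\mathrsfs{W}_n$, the letter $b$ cyclically permutes the states while one application of $a$ can merge a pair of states only after that pair has been rotated into the positions actually moved by $a$; iterating a block of shape $ab^{n-2}$ then peels one state off the image at a time. I therefore expect a word of length exactly $n^2-3n+2=(n-1)(n-2)$ --- essentially the \sw\ $(ab^{n-2})^{n-2}a$ of $\mathrsfs{W}_n$ shortened by one letter --- to reset $\mathrsfs{E}_n$, and I would verify this directly by following the image set block by block until it becomes a singleton.

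For the lower bound, observe first that the digraph $D(\mathrsfs{E}_n)$ carries loops at the states $3,\dots,n$, so its exponent lies well below $(n-1)^2$ and Proposition~\ref{lower bound} applied to $\mathrsfs{E}_n$ itself cannot produce a quadratic bound. Instead I would pass to the auxiliary automaton $\mathrsfs{E}'_n$ defined by the actions of $b$ and $c=aa$, shown on the right of Fig.~\ref{fig:e-n}, and argue in the spirit of Proposition~\ref{prop:simple idempotent}. Here $a$ fixes every state but $1,2$ and satisfies $a^3=a^2$, so that $c=aa$ is idempotent; the goal is a reduction relating the \rl\ $t$ of $\mathrsfs{E}_n$ to the \rl\ $t'$ of $\mathrsfs{E}'_n$.

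To obtain such a reduction I would bring a shortest \sw\ $w$ of $\mathrsfs{E}_n$ into a normal form: it ends in $a$ (since $b$ permutes $Q$), its runs of $a$'s are bounded (because $a^3=a^2$), and its $a$'s can then be grouped so that $w$ is rewritten over the alphabet $\{b,c\}$ with $c=aa$, yielding a \sw\ of $\mathrsfs{E}'_n$. Counting the occurrences of $c$, each of which can create at most one new coincidence of states, bounds the length of $w$ from below in terms of $t'$ and the number $n-1$ of coincidences required. Together with an independent evaluation of $t'$ --- I expect $\mathrsfs{E}'_n$ to be a \v{C}ern\'y-type automaton whose \rl\ can be read off from Theorem~\ref{dulmage}(b) and Proposition~\ref{lower bound}, exactly as in the proof of Theorem~\ref{theorem:cerny} but with $n$ replaced by $n-1$ --- this should give $t\ge n^2-3n+2$, meeting the upper bound.

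The hard part will be the lower bound, and specifically the adaptation of Proposition~\ref{prop:simple idempotent} to a letter $a$ that is \emph{not} simple: it moves two states rather than one and is not itself idempotent. Re-establishing by hand the normal form of a shortest \sw\ --- showing the $a$'s may legitimately be paired into $c=aa$ blocks, and pinning down the \emph{exact} additive loss incurred in the passage to $\mathrsfs{E}'_n$ so that $t'$ plus that loss equals precisely $(n-1)(n-2)$ --- is markedly more delicate than in the simple case. Correctly identifying $\mathrsfs{E}'_n$ and its \rl\ is the other place where genuine care is required; these two bookkeeping steps are where I expect the real work of the proof to lie.
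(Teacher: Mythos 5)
Your overall plan for the lower bound---rewrite a shortest reset word over $\{b,c\}$ with $c=a^2$, recognize a \v{C}ern\'y automaton on $n-1$ states, and count occurrences of $c$---is exactly the paper's strategy, but both halves of your argument break on a point you did not control: the structure of the letter $b$. In the automaton the theorem is actually about (the one in Fig.~\ref{fig:e-n}, which is what the paper's proof uses), $\delta(1,b)=3$: the $b$-cycle is $1\to3\to4\to\cdots\to n\to1$ of length $n-1$, state $2$ lies off this cycle, and $b$ is \emph{not} a permutation (it sends both $1$ and $2$ to $3$). The displayed formula for $\delta(i,b)$ in the paper contains a typo, but the reading you adopted ($b$ a full $n$-cycle) cannot be the intended one, because for that automaton the theorem is false: already for $n=5$ the word $a^2b^3a^2$ of length $7$ resets it, far below $n^2-3n+2=12$, so no argument could establish your lower bound under that reading. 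Your upper bound fails for the dual reason: for the true $\mathrsfs{E}_n$, no word of your proposed shape (isolated $a$'s in blocks $ab^{n-2}$) of length $n^2-3n+2$ can be a reset word, since any reset word of minimum length for $\mathrsfs{E}_n$ has all its $a$'s paired into $a^2$ factors; the word the paper exhibits is $(a^2b^{n-2})^{n-3}a^2$.

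The two steps you yourself flag as ``the real work'' are indeed the content, and your sketch of them does not go through. First, your normal form rests on ``$w$ ends in $a$ since $b$ permutes $Q$'' (false premise) and on $a^3=a^2$; but bounding runs of $a$'s does not let you pair them into $c=a^2$ blocks---you must exclude \emph{isolated} occurrences of $a$. The paper does this with three short observations: $bab$ and $b^2$ act identically (so no $bab$ factor occurs in a minimal word); $Q\cdot ab=Q\cdot b$ (so no prefix $ab$); and a minimal word ending in $a$ has penultimate image $\{2,3\}$, which is not contained in the image of $b$ (so no suffix $ba$). Second, your count of ``$n-1$ coincidences'' all charged to $c$ is wrong: in the $\{b,c\}$-automaton the letter $b$ itself merges the pair $\{1,2\}$, and moreover with $t'=(n-2)^2$ your bookkeeping would yield $t\ge n^2-3n+3$, contradicting the upper bound. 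The paper sidesteps both problems at once by restricting the rewritten word $v$ to the subautomaton on $\{1,3,\dots,n\}$, which is closed under $b$ and $c$ and isomorphic to $\mathrsfs{C}_{n-1}$; there $b$ \emph{is} a permutation, so Theorem~\ref{theorem:cerny} gives $|v|\ge(n-2)^2$ with at least $n-2$ occurrences of $c$, whence $|w|\ge(n-2)^2+(n-2)=n^2-3n+2$.
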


\begin{proof}
It is easy to verify that $(a^2b^{n-2})^{n-3}a^2$ is a \sw\ for
the automaton $\mathrsfs{E}_n$. The length of this word is equal
to $n(n-3)+2=n^2-3n+2$.

Now let $w$ be a \ssw\ for $\mathrsfs{E}_n$. We notice that in
$\mathrsfs{E}_n$ the words $bab$ and $b^2$ act in the same way and
so do the words $a^3$ and $a^2$. Therefore neither $bab$ nor $a^3$
can occur in the word $w$ as a factor. Besides that, $w$ cannot
start with $ab$. Indeed, the image of the set of all states under
the action of the word $ab$ is equal to $\{1,3,\dots,n\}$ and thus
coincides with the image of the letter $b$. Therefore would the
word $w$ start with $ab$, we could obtain a shorter \sw\ by
substituting $ab$ by $b$. Finally, $w$ cannot end with $ba$.
Indeed, if $w=w'a$, then the minimality of $w$ implies that the
image of the set of all states under the action of the word $w'$
is equal to $\{2,3\}$. This set, however, is not contained in the
image of the letter $b$, whence $w'$ cannot end with $b$. Thus,
every occurrence of the letter $a$ in the word $w$ happens within
the factor $a^2$ and no occurrences of these factors in $w$ can
overlap.

Let $c=a^2$, then the word $w$ can be rewritten into a word $v$
over the alphabet $\{b,c\}$. The actions of $b$ and $c$ on the set
$\{1,2,\dots,n\}$ define an automaton shown in Fig.\,\ref{fig:e-n}
(right). Since the words $w$ and $v$ act on $\{1,2,\dots,n\}$ in
the same way, $v$ is a \sw\ for this automaton, and hence, for its
subautomaton on the set $\{1,3,\dots,n\}$. It is easy to see that
the latter subautomaton is isomorphic to the automaton
$\mathrsfs{C}_{n-1}$. By Theorem~\ref{theorem:cerny} the length of
$v$ as a word over $\{b,c\}$ is at least $(n-2)^2$ and $v$
contains at least $n-2$ occurrences of $c$. Since every occurrence
of $c$ in $v$ corresponds to an occurrence of the factor $a^2$ in
$w$, we conclude that the length of word $w$ is not less than
$(n-2)^2+(n-2)=n^2-3n+2$.
\end{proof}

The proof of Theorem~\ref{thm:e-n} shows that the automaton
$\mathrsfs{E}_n$ arises from one of the ``trivial'' modifications
of the automaton $\mathrsfs{C}_{n-1}$ that we discussed in
Subsection~4.1. The last series of slowly \sa\ from the automata
family related to the digraph $W_n$ arises from a similar
modification of the automaton $\mathrsfs{W}_{n-1}$. The series
consists of the automata
$\mathrsfs{H}_n=\langle\{1,2,\dots,n\},\{a,b\},\delta\rangle$,
where the letter $a$ and $b$ act as follows:
$$\delta(i,a)=\begin{cases}
n &\text{if } i=1,\\
i &\text{if } 1<i<n,\\
1 &\text{if } i=n;
\end{cases}\quad
\delta(i,b)=\begin{cases}
i+1 &\text{if } i<n-1,\\
1 &\text{if } i=n-1,\\
3 &\text{if } i=n.
\end{cases}$$
The automaton $\mathrsfs{H}_n$ is shown in Fig.\,\ref{fig:h-n}
(left).

\begin{figure}[ht]
\unitlength .5mm
\begin{center}
\begin{picture}(72,72)(25,-75)
\gasset{Nw=16,Nh=16,Nmr=8,loopdiam=10}
\node[NLangle=0.0](n0)(24.25,-23.6){$n{-}1$}
\node[NLangle=0.0](n1)(52.25,-7.6){$1$}
\node[NLangle=0.0](n2)(80.25,-23.6){$2$}
\node[NLangle=0.0](n3)(24.25,-52){$n{-}2$}
\node[NLangle=0.0](n4)(55,-67.6){$4$} \node[NLangle=0.0](n6)(80.25,-52){$3$}
\drawedge(n1,n2){$b$} \drawedge(n2,n6){$b$} \drawedge(n0,n1){$b$}
\drawedge(n3,n0){$b$} \drawedge(n6,n4){$b$} \drawloop[loopangle=0.0](n6){$a$}
\drawloop[loopangle=270.0](n4){$a$} \drawloop[loopangle=180.0](n3){$a$}
\drawloop[loopangle=0.0](n2){$a$} \node[NLangle=0.0](n31)(52.0,-38){$n$}
\drawedge[ELside=r,ELdist=3.0,curvedepth=-5.0](n1,n31){$a$}
\drawedge[ELside=r,ELdist=3.0,curvedepth=-5.0](n31,n1){$a$}
\drawloop[loopangle=180.0](n0){$a$} \drawedge(n31,n6){$b$}
\put(31,-67){$\dots$}
\end{picture}
\begin{picture}(72,72)(-25,-75)
\gasset{Nw=16,Nh=16,Nmr=8,loopdiam=10}
\node[NLangle=0.0](n0)(24.25,-23.6){$n{-}1$}
\node[NLangle=0.0](n1)(52.25,-7.6){$1$}
\node[NLangle=0.0](n2)(80.25,-23.6){$2$}
\node[NLangle=0.0](n3)(24.25,-52){$n{-}2$}
\node[NLangle=0.0](n4)(55,-67.6){$4$} \node[NLangle=0.0](n6)(80.25,-52){$3$}
\drawedge[ELdist=0.7](n1,n2){$b$} \drawedge[ELpos=35, ELdist=0.5](n1,n6){$c$}
\drawedge(n2,n6){$b, c$} \drawedge(n0,n1){$b, c$} \drawedge(n3,n0){$b, c$}
\drawedge(n6,n4){$b,c$} \node[NLangle=0.0](n31)(50,-40){$n$}
\drawedge[ELpos=35,ELdist=0.7](n31,n6){$b$}
\drawedge[ELpos=35,ELdist=0.7](n31,n2){$c$} \put(31,-67){$\dots$}
\end{picture}
\end{center}
\caption{The automaton $\mathrsfs{H}_n$ and the automaton defined
by the actions of the words $b$ and $c=ab$}\label{fig:h-n}
\end{figure}
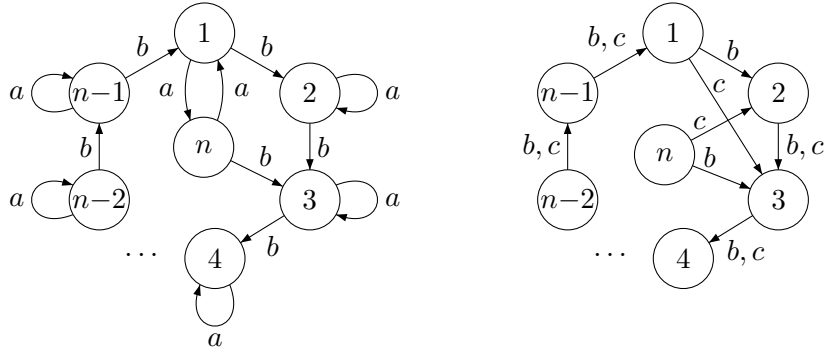

\begin{theorem}
\label{thm:h-n} The automaton $\mathrsfs{H}_n$ is synchronizing
and its \rl\ is equal to $n^2-4n+6$.
\end{theorem}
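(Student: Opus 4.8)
The plan is to follow the two-step scheme used for Theorems~\ref{theorem:cerny} and~\ref{thm:e-n}: first exhibit an explicit reset word of the claimed length, then match it from below via the substitution $c=ab$ and a reduction to the already-analyzed automaton $\mathrsfs{W}_{n-1}$. For the upper bound I would verify that the word $b(ab^{n-2})^{n-3}ab$, whose length is $1+(n-1)(n-3)+2=n^2-4n+6$, resets $\mathrsfs{H}_n$. Conceptually, the leading letter folds the whole state set into the cyclic part $\{1,\dots,n-1\}$, and the remaining factor then drives this set to a single state exactly as the reset word $(ab^{n-2})^{n-2}a$ of $\mathrsfs{W}_{n-1}$ does (after rewriting via $c=ab$); a direct computation of the successive images confirms this.

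For the lower bound, let $w$ be a \ssw\ for $\mathrsfs{H}_n$. Since $a$ transposes the states $1$ and $n$ and fixes all others, it is an involution, so $a^2$ acts as the identity and $w$ contains no factor $aa$; moreover, as $a$ is a bijection of the state set, $w$ cannot end with $a$, for otherwise deleting its last letter would yield a shorter reset word. Hence $w$ ends with $b$ and every occurrence of $a$ in $w$ is immediately followed by $b$, so $w$ can be rewritten as a word $v$ over $\{b,c\}$ with $c=ab$ that acts on the states exactly as $w$; here $|w|=|v|+k$, where $k$ is the number of occurrences of $c$ in $v$.

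Next I would analyze the automaton defined by the actions of $b$ and $c=ab$, shown in Fig.~\ref{fig:h-n}(right). Both $b$ and $c$ map $\{1,\dots,n\}$ \emph{onto} $\{1,\dots,n-1\}$, and this set is invariant; on it $b$ acts as the cyclic permutation $(1\,2\,\cdots\,n{-}1)$, while $c$ agrees with $b$ except that it sends $1$ to $3$. The relabeling $i\mapsto i-1\pmod{n-1}$ therefore identifies this subautomaton with $\mathrsfs{W}_{n-1}$, with $c$ playing the role of the defect letter. Since $v$ resets the whole $\{b,c\}$-automaton, its first letter already maps $\{1,\dots,n\}$ onto the \emph{full} set $\{1,\dots,n-1\}$, and the remaining suffix must reset $\mathrsfs{W}_{n-1}$; by Theorem~\ref{theorem:anan} its \rl\ is $\gamma=(n-1)^2-3(n-1)+3=n^2-5n+7$, whence $|v|\ge 1+\gamma=n^2-5n+8$. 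Independently, inside this subautomaton $b$ is a permutation and $c$ merges a single pair of states, so reducing $n-1$ states to one forces at least $n-2$ applications of $c$, i.e.\ $k\ge n-2$. Adding the two estimates gives $|w|=|v|+k\ge (n^2-5n+8)+(n-2)=n^2-4n+6$, matching the upper bound.

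The step I expect to be most delicate is the bookkeeping in the lower bound, specifically the ``$+1$'' in $|v|\ge 1+\gamma$: one must argue that the first letter of $v$ is genuinely spent folding state $n$ into the cyclic part and that the image right afterwards is the \emph{entire} state set of $\mathrsfs{W}_{n-1}$, so that its subsequent resetting costs the full threshold $\gamma$ rather than the threshold of some proper subset. It is precisely this extra unit, combined additively with the merge count $k\ge n-2$, that makes the two bounds sum to $n^2-4n+6$ rather than $n^2-4n+5$; checking that the two estimates are independent (and hence may be added) is the crux of the argument.
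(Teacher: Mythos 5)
Your proof is correct and follows essentially the same route as the paper's: the same reset word $b(ab^{n-2})^{n-3}ab$ for the upper bound, and for the lower bound the same rewriting of a minimal reset word over $\{b,c\}$ with $c=ab$, the same identification of the invariant subautomaton on $\{1,\dots,n-1\}$ with $\mathrsfs{W}_{n-1}$, and the same additive accounting $1+\bigl((n-1)^2-3(n-1)+3\bigr)+(n-2)=n^2-4n+6$. The only (harmless) deviation is that where the paper rules out a leading $a$ by noting that $a$ permutes the states and then writes $v=bv'$, you instead observe that whichever letter $v$ begins with maps the whole state set onto $\{1,\dots,n-1\}$; both arguments yield the same bound.
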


\begin{proof}
It is easy to check that the words $b(ab^{n-2})^{n-3}ab$ resets
the automaton $\mathrsfs{H}_n$. The length of this word is equal
to $1+(n-1)(n-3)+2=n^2-4n+6$.

Now let $w$ be a \ssw\ for $\mathrsfs{H}_n$. Since the word $a^2$
acts in $\mathrsfs{H}_n$ as the identity transformation, it cannot
occur as a factor in $w$ Besides that, the word $w$ neither starts
nor ends with the letter $a$ because this letter acts as a
permutation of the state set of $\mathrsfs{H}_n$.

Let $c=ab$, then the word $w$ can be rewritten into a word $v$
over the alphabet $\{b,c\}$. The actions of $b$ and $c$ on the set
$\{1,2,\dots,n\}$ define an automaton shown in Fig.\,\ref{fig:h-n}
(right). Since the words $w$ and $v$ act on $\{1,2,\dots,n\}$ in
the same way, $v$ is a \sw\ for this automaton. We have noticed
that the word $w$ starts with the letter $b$, hence so does the
word $v$. If we write $v=bv'$ for some $v'\in\{b,c\}^*$, then it
is easy to see that $v'$ is a \sw\ for the subautomaton on the set
$\{1,2,\dots,n-1\}$. Since this subautomaton is isomorphic to the
automaton $\mathrsfs{W}_{n-1}$, Theorem~\ref{theorem:anan} implies
that the length of $v'$ as a word over $\{b,c\}$ is at least
$(n-1)^2-3(n-1)+3$. Besides that, $v'$ contains at least $n-2$
occurrences of $c$ because $b$ only permutes the states and each
application of $c$ can send only one pair of states to a single
state. Since every occurrence of $c$ in $v'$ corresponds to an
occurrence of the factor $ab$ in $w$, we can conclude that the
length of the word $w$ is not less than
$1+((n-1)^2-3(n-1)+3)+(n-2)=n^2-4n+6$.
\end{proof}

\paragraph*{4.3. Automata related to digraphs of the series $D_n$.} The digraph $D_n$
is the $n$-digraph with exponent $(n-1)^2$ that corresponds to the
second matrix in~\eqref{wielandt}. It can be obtained from the
digraph $W_n$ by adding the edge $(n-1,1)$. It is easy to see that
up to isomorphism and renaming of letters, there exist exactly two
colorings of the digraph $D_n$ with two letters.
Fig.\,\ref{fig:dulmage} shows the digraph $D_n$ and two its
colorings, the automata $\mathrsfs{D}'_n$ and $\mathrsfs{D}''_n$.

\begin{figure}[thb]
\begin{center}
\unitlength .5mm
\begin{picture}(64,60)(30,-72)
\gasset{Nw=16,Nh=16,Nmr=8} \node(n0)(32.0,-16.0){1} \node(n1)(0,-40.0){$n$}
\node(n2)(64.0,-40.0){2} \node(n3)(12.0,-72.0){$n{-}1$}
\node(n4)(52.0,-72.0){3} \drawedge(n1,n0){} \drawedge(n2,n4){}
\drawedge(n0,n2){} \drawedge(n3,n0){} \drawedge(n3,n1){} \drawedge(n1,n2){}
\put(31,-73){$\dots$}
\end{picture}
\begin{picture}(64,60)(0,-72)
\gasset{Nw=16,Nh=16,Nmr=8} \node(n0)(32.0,-16.0){1} \node(n1)(0,-40.0){$n$}
\node(n2)(64.0,-40.0){2} \node(n3)(12.0,-72.0){$n{-}1$}
\node(n4)(52.0,-72.0){3} \drawedge[ELdist=2.0](n1,n0){$b$}
\drawedge[ELdist=1.5](n2,n4){$a, b$} \drawedge[ELdist=1.7](n0,n2){$a,b$}
\drawedge[ELpos=40,ELdist=1.7](n3,n0){$a$} \drawedge[ELdist=1.7](n3,n1){$b$}
\drawedge[ELdist=2.0](n1,n2){$a$} \put(31,-73){$\dots$}
\end{picture}
\begin{picture}(64,60)(-30,-72)
\gasset{Nw=16,Nh=16,Nmr=8} \node(n0)(32.0,-16.0){1} \node(n1)(0,-40.0){$n$}
\node(n2)(64.0,-40.0){2} \node(n3)(12.0,-72.0){$n{-}1$}
\node(n4)(52.0,-72.0){3} \drawedge[ELdist=2.0](n1,n0){$a$}
\drawedge[ELdist=1.5](n2,n4){$a, b$} \drawedge[ELdist=1.7](n0,n2){$a, b$}
\drawedge[ELpos=40,ELdist=1.7](n3,n0){$a$} \drawedge[ELdist=1.7](n3,n1){$b$}
\drawedge[ELdist=2.0](n1,n2){$b$} \put(31,-73){$\dots$}
\end{picture}
\end{center}
\caption{The digraph $D_n$ and its colorings $\mathrsfs{D}'_n$ and
$\mathrsfs{D}''_n$}\label{fig:dulmage}
\end{figure}

\begin{theorem}
\label{theorem:new series} The automata $\mathrsfs{D}'_n$ and
$\mathrsfs{D}''_n$ are synchronizing and its \rl{}s are equal to
$n^2-3n+4$ and $n^2-3n+2$ respectively.
\end{theorem}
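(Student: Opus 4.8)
The plan is to bound both automata from above by exhibiting explicit \sws\ and from below through the exponent of their common digraph $D_n$, which equals $(n-1)^2$ by Theorem~\ref{dulmage}(b). In $\mathrsfs{D}'_n$ the letter $b$ acts as the full cycle $(1\,2\,\cdots\,n)$ while $a$ has rank $n-1$ and collapses the cyclically adjacent pair $\{1,n\}$; in $\mathrsfs{D}''_n$ both letters have rank $n-1$, each collapsing a single pair. An explicit \sw\ of the claimed length is then produced by the obvious ``rotate-and-merge'' procedure: reposition the unique collapsible pair of the current image (using powers of $b$ in the case of $\mathrsfs{D}'_n$), apply a rank-decreasing letter to identify it, and iterate until a singleton is reached. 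One verifies directly that this yields \sws\ of length $n^2-3n+4$ for $\mathrsfs{D}'_n$ and $n^2-3n+2$ for $\mathrsfs{D}''_n$ (for $n=4$ one may take $ab^2ab^3a$ and $baaaba$, respectively).

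For $\mathrsfs{D}''_n$ the matching lower bound is immediate. Since $\mathrsfs{D}''_n$ is a coloring of $D_n$, Proposition~\ref{lower bound} together with Theorem~\ref{dulmage}(b) gives that the \rl\ of $\mathrsfs{D}''_n$ is at least $\gamma(D_n)-(n-1)=(n-1)^2-(n-1)=n^2-3n+2$, so for this half of the theorem the entire content is the construction of the resetting word.

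The difficulty, and the main obstacle, is the lower bound for $\mathrsfs{D}'_n$, where the exponent estimate is not tight: $\gamma(D_n)=(n-1)^2$ yields through Proposition~\ref{lower bound} only $n^2-3n+2$, two units short of the claimed value (in contrast with $\mathrsfs{W}_n$ in Theorem~\ref{theorem:anan}, where the exponent bound is exact). To close this gap I would analyze a \ssw\ $w$ directly, in the spirit of the proofs of Theorems~\ref{thm:e-n} and~\ref{thm:h-n}. As $b$ permutes $Q$, neither leading nor trailing powers of $b$ affect whether a word resets (they fix the initial set $Q$ and preserve the cardinality of the final image), so $w$ may be taken to begin and end with $a$; and since $b$ preserves rank while $a$ lowers it by at most one, $w$ must contain at least $n-1$ occurrences of $a$. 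Writing $w$ as $a$ followed by alternating blocks $b^{j_1}a\,b^{j_2}a\cdots$, the problem reduces to a sharp lower bound on the total rotation performed by the $b$-blocks: because $a$ collapses only the pair $\{1,n\}$, each of the required $n-1$ identifications can be made only after the two states in question have been rotated into this distinguished position, and summing these rotation costs should force $|w|\ge n^2-3n+4$.

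I expect this quantitative rotation estimate to be the crux. Crude reachability arguments do not suffice: even the refined form of Proposition~\ref{lower bound}, in which the generic bound $n-1$ on the distance from the synchronizing state to an arbitrary vertex is replaced by the true eccentricity of that state in $D_n$, appears to stop one unit short, at $n^2-3n+3$. Hence the final unit must be extracted from the cyclic-gap structure of the image sets under the full-cycle letter $b$. Concretely, I would introduce an invariant recording the cyclic arrangement of the current image (the multiset of gaps between consecutive states), show that it can improve only when $a$ is applied with $\{1,n\}$ in position, and control the competing possibility of spending extra copies of $a$ in place of longer $b$-blocks; pushing this bookkeeping through so as to exclude every \sw\ of length $n^2-3n+3$ is the step I expect to demand the most care.
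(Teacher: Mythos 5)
Your upper bounds and your lower bound for $\mathrsfs{D}''_n$ are sound and coincide with the paper's argument: the words you test for $n=4$ are exactly the paper's words $(ab^{n-2})^{n-2}ba$ and $(ba^{n-1})^{n-3}ba$ specialized to $n=4$, and the bound $\gamma(D_n)-(n-1)=n^2-3n+2$ via Proposition~\ref{lower bound} and Theorem~\ref{dulmage}(b) is precisely how the paper disposes of $\mathrsfs{D}''_n$. You also correctly locate the real difficulty: the exponent bound leaves a two-unit deficit for $\mathrsfs{D}'_n$.

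But at that point your proposal stops being a proof. The entire content of the theorem for $\mathrsfs{D}'_n$ is the exclusion of \sws\ of lengths $n^2-3n+2$ and $n^2-3n+3$, and what you offer there is a plan --- a cyclic-gap invariant whose key properties you say you ``would'' establish, with the decisive bookkeeping step explicitly deferred (``the step I expect to demand the most care''). Nothing in the proposal shows that this invariant argument can actually be closed; as written, it is a conjecture about a proof, not a proof. The paper closes the gap by a completely different and much shorter route, which your sketch misses. First, since $2$ is the \emph{unique} state of $\mathrsfs{D}'_n$ that is the common end of two edges carrying the same label, any \ssw\ $w$ must send every state to~$2$; hence the digraph $D_n$ contains, for each vertex $q$, a directed path of length exactly $|w|$ from $q$ to~$2$. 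Taking $q=1$, whose only outgoing edge is $(1,2)$, such a path is that edge followed by a closed walk at~$2$ of length $|w|-1$; closed walks decompose into simple cycles, and the simple cycles of $D_n$ have lengths $n$ and $n-1$ only. If $|w|=n^2-3n+2$, then $|w|-1=n^2-3n+1=n(n-1)-n-(n-1)$, which by Lemma~\ref{sylvester} (the Sylvester--Frobenius lemma) is not a non-negative integer combination of $n$ and $n-1$ --- contradiction. If $|w|=n^2-3n+3$, one notes $w$ must start with $a$ (as $b$ is a permutation), so starting from state $n-1$ the first edge is $(n-1,1)$, forcing a path of length $n^2-3n+2$ from~$1$ to~$2$, which the previous case already excludes. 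This ``path length must be a cycle-length combination'' idea, keyed to the fact that all resets end at state~$2$, is the missing ingredient; without it (or a completed version of your invariant bookkeeping), your argument establishes only the \rl\ of $\mathrsfs{D}''_n$, not that of $\mathrsfs{D}'_n$.
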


\begin{proof}
It is not hard to verify that the word $(ab^{n-2})^{n-2}ba$ is a \sw\ for the automaton $\mathrsfs{D}'_n$ and the word $(ba^{n-1})^{n-3}ba$
is a \sw\ for the automaton $\mathrsfs{D}''_n$. The lengths of these words are equal $(n-1)(n-2)+2=n^2-3n+4$ and $n(n-3)+2=n^2-3n+2$
respectively.

Theorem~\ref{dulmage}(b) and Proposition~\ref{lower bound} imply
that the \rl\ for the colorings of the digraph $D_n$ cannot be
less than $(n-1)^2-(n-1)=n^2-3n+2$. This proves our theorem for
the automaton $\mathrsfs{D}''_n$.

Now consider the automaton $\mathrsfs{D}'_n$. Here we shall make
use of the following elementary result.
\begin{lemma}[\!\!{\mdseries\cite[Theorem~2.1.1]{RaAl05}}]
\label{sylvester} If $k,\ell$ are relatively prime positive
integers, then $k\ell-k-\ell$ is the largest integer that is not
expressible as a non-negative integer combination of $k$ and
$\ell$.
\end{lemma}

Let $w$ be a \ssw\ for the automaton $\mathrsfs{D}'_n$. Since 2 is the unique state in $\mathrsfs{D}'_n$ that is a common end of two
different edges with the same label, the minimality of $w$ implies that $w$ sends all states of the automaton to~2. Suppose that the length
of $w$ is equal to $n^2-3n+2$. Then the digraph $D_n$ has a directed path of this length from~1 to~2. There is a unique edge starting at~1,
namely, $(1,2)$, hence the path consists of this edge followed by a directed cycle of length $n^2-3n+1$. The digraph $D_n$ has exactly
three simple directed cycles: one of length $n$ and two of length $n-1$. Every directed cycle consists of simple directed cycles whence the
number $n^2-3n+1$ (as the length of a directed cycle in $D_n$) must be  a non-negative integer combination of the numbers $n$ and $n-1$
(the lengths of simple directed cycles). However this is impossible by Lemma~\ref{sylvester} since $n^2-3n+1=n(n-1)-n-(n-1)$.

Now suppose that the length of $w$ is equal to $n^2-3n+3$. Then
the digraph $D_n$ has a directed path of this length from~$n-1$
to~2. Since $b$ acts as a permutation of the state set of the
automaton $\mathrsfs{D}'_n$, the word $w$ starts with the letter
$a$. The state $n-1$ under the action of $a$ goes to the state~1.
Therefore $D_n$ has also a directed path of length $n^2-3n+2$
from~1 to~2 but in the previous paragraph we have shown that this
is impossible. Thus, the length of $w$ cannot be less than
$n^2-3n+4$.
\end{proof}

The series $\mathrsfs{D}'_n$ is of interest because for $n>6$, the automata of this series have the largest  \rl\ among all known automata
except the ones from the \v{C}ern\'y series $\mathrsfs{C}_n$ as well as the largest  \rl\ among all known automata without loops. The
series $\mathrsfs{D}''_n$ also possess an extremal property: the automata from this series have the largest \rl\ among all known automata
in which no letter acts as a permutation of the state set.

There is one further series of slowly \sa\ related to the digraphs $D_n$; it consists of $n$-automata with \rl\ $n^2-4n+6$. We do not
present it here since one series with the same parameters has already been described above, see Theorem~\ref{thm:h-n}.

\paragraph*{4.4. Automata related to digraphs of the series $V_n$.} The digraph $V_n$
is the $n$-digraph corresponding to the first matrix in~\eqref{odd
island}. If we denote the vertices of $V_n$ by $1,2,\dots,n$, then
its edges are суть $(n,1)$, $(n,3)$ and $(i,i+1)$ for
$i=1,\dots,n-1$. The digraph $V_n$ is primitive only when $n$ is
odd, and in this case its exponent is equal to $n^2-3n+4$. The
digraphs of the series $V_n$ give rise to the family of automata
$\mathrsfs{F}_n=\langle\{1,2,\dots,n\},\{a,b\},\delta\rangle$ in
which the letters $a$ and $b$ act as follows:
$$\delta(i,a)=\begin{cases}
i &\text{if } i<n,\\
2 &\text{if } i=n;
\end{cases}\quad
\delta(i,b)=\begin{cases}
i+1 &\text{if } i<n,\\
1 &\text{if } i=n.
\end{cases}$$
The automaton $\mathrsfs{F}_n$ is shown in Fig.\,\ref{fig:f-n}
(left).

\begin{figure}[th]
\begin{center}
\unitlength .5mm
\begin{picture}(72,86)(25,-76)
\gasset{Nw=16,Nh=16,Nmr=8,loopdiam=10} \node(n0)(36.0,-16.0){1}
\node(n1)(4.0,-40.0){$n$} \node(n2)(68.0,-40.0){2}
\node(n3)(16.0,-72.0){$n{-}1$} \node(n4)(56.0,-72.0){3}
\drawedge[ELdist=2.0](n1,n0){$b$} \drawedge[ELdist=1.5](n2,n4){$b$}
\drawedge[ELdist=1.7](n0,n2){$b$} \drawedge[ELdist=1.7](n3,n1){$b$}
\drawedge[ELdist=1.7](n1,n2){$a$} \drawloop[ELdist=1.5,loopangle=30](n2){$a$}
\drawloop[ELdist=2.4,loopangle=-30](n4){$a$}
\drawloop[ELdist=1.5,loopangle=210](n3){$a$}
\drawloop[ELdist=1.5,loopangle=90](n0){$a$} \put(31,-73){$\dots$}
\end{picture}
\begin{picture}(72,86)(-25,-76)
\gasset{Nw=16,Nh=16,Nmr=8} \node(n0)(36.0,-16.0){1} \node(n1)(4.0,-40.0){$n$}
\node(n2)(68.0,-40.0){2} \node(n3)(16.0,-72.0){$n{-1}$}
\node(n4)(56.0,-72.0){3} \drawedge[ELdist=1.5](n2,n4){$b,c$}
\drawedge[ELdist=1.7](n0,n2){$b,c$} \drawedge[ELdist=1.7](n3,n1){$b,c$}
\drawedge[ELdist=2.0](n1,n0){$b$} \drawedge[ELdist=2.0](n1,n4){$c$}
\put(31,-73){$\dots$}
\end{picture}
\end{center}
\caption{The automaton $\mathrsfs{F}_n$ and the automaton, defined
by the actions of the words $b$ and $c=ab$}\label{fig:f-n}
\end{figure}
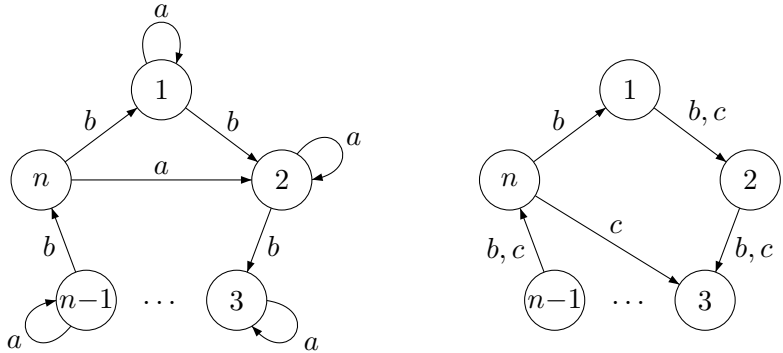

\begin{theorem}
\label{thm:f-n} For odd $n>3$, the automaton $\mathrsfs{F}_n$ is
synchronizing and its \rl\ is equal to $n^2-3n+3$.
\end{theorem}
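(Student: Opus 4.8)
The plan is to prove the upper and lower bounds separately, reusing the strategy that worked for the companion series $\mathrsfs{W}_n$, $\mathrsfs{C}_n$, and $\mathrsfs{H}_n$. For the upper bound I would exhibit an explicit \sw: by analogy with the earlier series I expect $(ab^{n-2})^{n-2}a$, whose length is $(n-1)(n-2)+1=n^2-3n+3$, to reset $\mathrsfs{F}_n$, and I would confirm this by a routine state-chase (each block $b^{n-2}$ rotates the current image around the big cycle and the following $a$ merges the pair $\{2,n\}$, so the image shrinks by one after each block until only $\{2,n\}$ remains). The lower bound is the substantive part, and I would obtain it by passing to the auxiliary automaton $\mathrsfs{F}'_n$ defined by the actions of the words $b$ and $c=ab$, depicted in Fig.\,\ref{fig:f-n}(right).

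A short computation shows that in $\mathrsfs{F}'_n$ the letters $b$ and $c$ agree on $\{1,\dots,n-1\}$ (both act as $i\mapsto i+1$) and differ only at the state $n$, which $b$ sends to $1$ and $c$ sends to $3$. Hence the underlying digraph of $\mathrsfs{F}'_n$ is precisely $V_n$, so that $\mathrsfs{F}'_n$ is a coloring of $V_n$. I would then analyse a \ssw\ $w$ for $\mathrsfs{F}_n$. Since $a$ is idempotent in $\mathrsfs{F}_n$, the factor $a^2$ cannot occur in $w$; since $b$ is a permutation, $w$ cannot end with $b$, so $w=w'a$ and $w'$ ends with $b$. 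Minimality forces the image of the whole state set under $w'$ to be a two-element set collapsed by $a$, and the only such set is the unique non-trivial fibre $\{2,n\}$ of $a$. Because no two occurrences of $a$ in $w'$ are adjacent and $w'$ ends in $b$, every $a$ of $w'$ is immediately followed by a $b$; thus $w'$ rewrites uniquely into a word $v$ over $\{b,c\}$, with each $c$ standing for a factor $ab$ of length $2$.

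Since $c$ sends both $2$ and $n$ to $3$, the word $vc$ drives the entire state set of $\mathrsfs{F}'_n$ to the single state $3$; hence $\mathrsfs{F}'_n$ is synchronizing and, writing $s$ for its \rl, we get $|v|\ge s-1$. As $b$ merely permutes states while each application of $c$ collapses at most one pair, any \sw\ of $\mathrsfs{F}'_n$ contains at least $n-1$ occurrences of $c$, so $v$ contains at least $n-2$ of them. Counting letters of the original word, $|w|=|w'|+1=|v|+(\text{number of }c\text{'s in }v)+1\ge(s-1)+(n-2)+1=s+n-2$. Finally, since $\mathrsfs{F}'_n$ colors $V_n$ and for odd $n$ the exponent of $V_n$ equals $n^2-3n+4$ by Theorem~\ref{dulmage}(d), Proposition~\ref{lower bound} yields $s\ge(n^2-3n+4)-(n-1)=n^2-4n+5$, whence $|w|\ge(n^2-4n+5)+(n-2)=n^2-3n+3$, matching the upper bound.

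The main obstacle is the structural analysis of the shortest reset word $w$: one has to argue rigorously that $w$ ends in $a$, that the image of the state set just before this final $a$ is forced to be exactly $\{2,n\}$, and that the passage from $w'$ to the word $v$ over $\{b,c\}$ is both well-defined and length-faithful. Once these facts are secured, the two independent estimates---$|v|\ge s-1$ coming from the \rl\ of $\mathrsfs{F}'_n$, and ``at least $n-2$ occurrences of $c$'' coming from the merging count---add up exactly, and the only external input is the exponent of $V_n$ provided by Theorem~\ref{dulmage}(d).
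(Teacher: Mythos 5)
Your proof is correct and follows essentially the same route as the paper: the same word $(ab^{n-2})^{n-2}a$ for the upper bound, and for the lower bound the same passage to the auxiliary automaton defined by the actions of $b$ and $c=ab$, recognized as a coloring of $V_n$, with the estimate $s\ge(n^2-3n+4)-(n-1)=n^2-4n+5$ drawn from Theorem~\ref{dulmage}(d) and Proposition~\ref{lower bound}. The only difference is cosmetic: the rewriting of $w=w'a$ into a word over $\{b,c\}$ and the count of occurrences of $c$, which you re-derive inline, is exactly the content of Proposition~\ref{prop:simple idempotent}, which the paper simply invokes.
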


\begin{proof}
It can be easily verified that for each odd $n>3$ the word
$(ab^{n-2})^{n-2}a$ is a \sw\ for $\mathrsfs{F}_n$. The length of
this word is $(n-1)(n-2)+1=n^2-3n+3$.

Clearly, the automaton $\mathrsfs{F}_n$ satisfies the condition of Proposition~\ref{prop:simple idempotent}. The action of the words $b$
and $c=ab$ on the set $\{1,2,\dots,n\}$ defines an automaton shown in Fig.\,\ref{fig:f-n} (right); we denote this automaton by
$\mathrsfs{V}$. It is easy to see that the automaton $\mathrsfs{V}$ is isomorphic to a coloring of the digraph $V_n$.
Theorem~\ref{dulmage}(d) and Proposition~\ref{lower bound} imply that the \rl\ for colorings of the digraph $V_n$ cannot be less than
$(n^2-3n+4)-(n-1)=n^2-4n+5$. Applying Proposition~\ref{prop:simple idempotent}, we conclude that \rl\ for $\mathrsfs{F}_n$ cannot be less
than $(n^2-4n+5)+(n-2)=n^2-3n+3$.
\end{proof}

\paragraph*{4.5. Automata related to digraphs of the series $R_n$.} The
digraph $R_n$ is the $n$-digraph corresponding to the second matrix in~\eqref{odd island}. One obtains it from the digraph $V_n$ by adding
the edge $(n-1,2)$. The digraph $R_n$ is primitive only when $n$ is odd, and in this case its exponent is equal to $n^2-3n+3$. The digraphs
of the series $R_n$ give rise to the family of automata $\mathrsfs{B}_n=\langle\{1,2,\dots,n\},\{a,b\},\delta\rangle$ in which the letters
$a$ and $b$ act as follows:
$$\delta(i,a)=\begin{cases}
i &\text{if } i<n-1,\\
1 &\text{if } i=n-1,\\
2 &\text{if } i=n;
\end{cases}\quad
\delta(i,b)=\begin{cases}
i+1 &\text{if } i<n,\\
1 &\text{if } i=n.
\end{cases}$$
The automaton $\mathrsfs{B}_n$ is shown in Fig.\,\ref{fig:avz-n}
(left).

\begin{figure}[ht]
\begin{center}
\unitlength .5mm
\begin{picture}(72,72)(30,-76)
\gasset{Nw=16,Nh=16,Nmr=8,loopdiam=10} \node(n1)(56.0,-16.0){1}
\node(n2)(68.0,-44.0){2} \node(n3)(56.0,-72.0){3}
\node(n4)(16.0,-72.0){$n{-}2$} \node(n5)(4.0,-44.0){$n{-}1$}
\node(n6)(16.0,-16.0){$n$} \drawedge[ELdist=1.7](n1,n2){$b$}
\drawedge[ELdist=1.7](n2,n3){$b$} \drawedge[ELdist=1.7](n4,n5){$b$}
\drawedge[ELdist=1.7](n5,n6){$b$} \drawedge[ELdist=1.7](n6,n1){$b$}
\drawedge[ELdist=1.7,ELpos=40](n5,n1){$a$}
\drawedge[ELdist=1.7,ELpos=60](n6,n2){$a$}
\drawloop[ELdist=1.5,loopangle=30](n1){$a$}
\drawloop[ELdist=1.5,loopangle=0](n2){$a$}
\drawloop[ELdist=1.5,loopangle=-30](n3){$a$}
\drawloop[ELdist=1.5,loopangle=210](n4){$a$} \put(31,-73){$\dots$}
\end{picture}
\begin{picture}(72,72)(-30,-76)
\gasset{Nw=16,Nh=16,Nmr=8} \node(n1)(56.0,-16.0){1} \node(n2)(68.0,-44.0){2}
\node(n3)(56.0,-72.0){3} \node(n4)(16.0,-72.0){$n{-}2$}
\node(n5)(4.0,-44.0){$n{-}1$} \node(n6)(16.0,-16.0){$n$}
\drawedge[ELdist=1.7](n1,n2){$b,c$} \drawedge[ELdist=1.7](n2,n3){$b,c$}
\drawedge[ELdist=1.7](n4,n5){$b,c$} \drawedge[ELdist=1.7](n5,n6){$b$}
\drawedge[ELdist=1.7](n6,n1){$b$} \drawedge[ELdist=1.7,ELpos=60](n5,n2){$c$}
\drawedge[ELdist=1.7,ELpos=35](n6,n3){$c$} \put(31,-73){$\dots$}
\end{picture}
\end{center}
\caption{The automaton $\mathrsfs{B}_n$ and the automaton, defined
by the action of the words $b$ and $c=ab$}\label{fig:avz-n}
\end{figure}
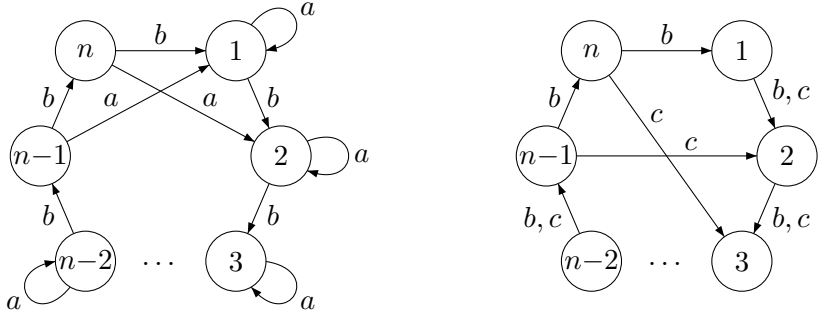

The series $\mathrsfs{B}_n$ (for odd $n>3$) was published
in~\cite{AVZ} and up to recently, it remained the only infinite
series of slowly \sa\ with two input letters in the literature
besides the \v{C}ern\'y series. The fact that the series
$\mathrsfs{B}_n$ is related to the digraphs form the series $R_n$
has not been reported earlier.

The next statement has been the main result of~\cite{AVZ} where it
has been proved by a game-theoretic method. Here we present a
completely elementary proof similar to the proofs of
Theorems~\ref{theorem:cerny} and~\ref{thm:f-n}.

\begin{theorem}[\!\!{\mdseries\cite[Theorem~1.1]{AVZ}}]
\label{theorem:avz} If $n>3$ is odd, the automaton
$\mathrsfs{B}_n$ is synchronizing and its \rl\ is equal to
$n^2-3n+2$.
\end{theorem}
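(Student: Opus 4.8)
The plan is to follow the template of Theorems~\ref{theorem:cerny} and~\ref{thm:f-n}: produce a short \sw\ for the upper bound, and obtain the matching lower bound by passing to the automaton defined by the actions of $b$ and $c=ab$, which (as Fig.~\ref{fig:avz-n} shows) is a coloring $\mathrsfs{R}$ of the digraph $R_n$. For the upper bound I would exhibit an explicit word of length $(n-1)(n-2)=n^2-3n+2$ resetting $\mathrsfs{B}_n$: a first application of $a$ collapses the full state set to $\{1,\dots,n-2\}$, and then, alternating $a$ with suitable powers of $b$, one drives the image down to a singleton using exactly $n-1$ letters $a$ in total (routine verification). The fact that exactly $n-1$ letters $a$ suffice, and are needed, is the real content.

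For the lower bound I first record the two structural facts underlying the reduction: $b$ acts as a cyclic permutation and $a$ is idempotent (indeed $\delta(i,a^2)=\delta(i,a)$ for every $i$). Hence a \ssw\ $w$ cannot end with $b$ and cannot contain $aa$ as a factor, so $w=va$ where $v$ is a word in $b$ and $c=ab$. Exactly as in the proof of Proposition~\ref{prop:simple idempotent}, minimality forces the image $\delta(Q,v)$ to be one of the two pairs $\{1,n-1\}$ or $\{2,n\}$, and since $a$ sends $\{1,n-1\}$ to $1$ and $\{2,n\}$ to $2$, the word $vc$ resets $\mathrsfs{R}$. Now Theorem~\ref{dulmage}(d) gives $\gamma(R_n)=n^2-3n+3$, so Proposition~\ref{lower bound} yields $|vc|\ge\gamma(R_n)-(n-1)=n^2-4n+4$, i.e.\ $|v|\ge n^2-4n+3$. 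Writing $p$ for the number of occurrences of $c$ in $v$ and expanding each $c$ back into $ab$, we get $|w|=|v|+p+1$, so the theorem follows once $p\ge n-2$ is established.

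This count is the hard part. In the single-fold situations of Proposition~\ref{prop:simple idempotent} the letter $c$ merges only one pair of states, so a reset word must contain at least $n-1$ copies of $c$ and $p\ge n-2$ is immediate. Here, however, $c$ adds $1$ to every state except that the two top states $n-1,n$ are advanced by $3$ modulo $n$; consequently $c$ identifies the two pairs $\{1,n-1\}$ and $\{2,n\}$ simultaneously, and a single application can drop the rank by $2$. A naive rank count thus gives only $p\ge(n-2)/2$, which is far too weak, and indeed the first letter $c$ of $vc$ always acts on the full set and does perform such a double merge. The key lemma I would prove is that every \sw\ of $\mathrsfs{R}$ nevertheless contains at least $n-1$ occurrences of $c$; equivalently, along any synchronizing run the number of applications of $c$ that cause a double merge never exceeds the number that cause no merge at all. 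Since a run from the full set to a singleton performs exactly $n-1$ identifications, this inequality is precisely what pins the total number of $c$'s at $n-1$.

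To establish the lemma I would argue on the cyclic arc structure of the image set, which is invariant under $b$: a double merge requires the four consecutive states $n-1,n,1,2$ to be present, and I would show by a charging argument that each double merge must later be paid for by a rank-preserving $c$ that merely rotates a pair toward a mergeable configuration. This is the phenomenon already visible in the observation that a pair at cyclic distance $1$ can never be collapsed by $c$ directly, but must first be moved, at the cost of a non-merging $c$, toward a cyclic-distance-$2$ (hence mergeable) configuration. Making this charging precise---ideally by exhibiting a $b$-invariant potential $\Phi$ with $\Phi(\{1,\dots,n\})-\Phi(\{\text{one state}\})=n-1$ that drops by at most $1$ under each $c$---is the main obstacle, and is exactly where the genuinely new work beyond the $\mathrsfs{C}_n$, $\mathrsfs{W}_n$ and $\mathrsfs{F}_n$ arguments resides. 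Once $p\ge n-2$ is in hand, combining it with $|v|\ge n^2-4n+3$ gives $|w|\ge(n^2-4n+3)+(n-2)+1=n^2-3n+2$, matching the upper bound.
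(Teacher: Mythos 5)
Your reduction is exactly the paper's: write a minimal reset word of $\mathrsfs{B}_n$ as $w=w'a$, rewrite $w'$ over $b$ and $c=ab$ as a word $v$, observe that $vc$ resets the coloring $\mathrsfs{R}$ of $R_n$, and combine Theorem~\ref{dulmage}(d) with Proposition~\ref{lower bound} to get $|v|\ge n^2-4n+3$, so that everything reduces to showing $v$ contains at least $n-2$ occurrences of $c$. But at precisely this point---which you yourself call ``the main obstacle''---your proposal stops being a proof. Your key lemma (every reset word of $\mathrsfs{R}$ contains at least $n-1$ occurrences of $c$) is only conjectured, and the charging/potential argument sketched for it is not carried out: no potential $\Phi$ is defined, no proof is given that a rank-dropping-by-two application of $c$ must be ``paid for'' by a non-merging one, and the invariant to be maintained is never identified. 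The lemma is plausible (it is consistent with what the paper's count shows for words coming from minimal reset words), but as it stands this is a genuine gap, located at the one step that distinguishes this theorem from Theorems~\ref{theorem:cerny} and~\ref{thm:f-n}.

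The paper closes this gap by a different and entirely elementary device that your proposal misses: since $b^n$ acts on $\mathrsfs{B}_n$ as the identity, a minimal reset word contains no factor $b^n$, hence $w=ab^{k_1}ab^{k_2}a\cdots ab^{k_m}a$ with $1\le k_i\le n-1$. Then $|v|=\sum_{i=1}^m k_i\ge n^2-4n+3=(n-3)(n-1)$ combined with $k_i\le n-1$ gives $m(n-1)\ge (n-3)(n-1)$, i.e.\ $m\ge n-3$ by pure averaging; and the single equality case $m=n-3$ forces $k_i=n-1$ for every $i$, i.e.\ $vc=(cb^{n-2})^{n-3}c$, which is checked directly not to be a reset word (it permutes states $2$ and $3$). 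Hence $m\ge n-2$ with no charging argument at all---the constraint $k_i\le n-1$, which your write-up never uses, is the missing idea. Note also that your upper bound is not complete either: you never exhibit a reset word of length $n^2-3n+2$, whereas the paper gives $(ab^{n-2})^{\frac{n-3}{2}}ab^{n-3}(ab^{n-2})^{\frac{n-3}{2}}a$ explicitly; ``alternating $a$ with suitable powers of $b$'' is not a verification, and indeed the need for the irregular middle block $ab^{n-3}$ reflects the same phenomenon your missing lemma was meant to capture.
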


\begin{proof}
For each odd $n>3$ the word
$(ab^{n-2})^{\frac{n-3}{2}}ab^{n-3}(ab^{n-2})^{\frac{n-3}{2}}a$ is
easily seen to be a \sw\ for the automaton $\mathrsfs{B}_n$. The
length of this word is equal to
$(n-1)\frac{n-3}{2}+n-2+(n-1)\frac{n-3}{2}+1=n^2-3n+2.$

Let $w$ be a \ssw\ for the automaton $\mathrsfs{B}_n$ and let $t$
be the length of $w$. Since the letter $b$ acts as a permutation
of the state set, the word $w$ neither starts nor ends with $b$.
In particular, $w=w'a$ for some word $w'\in\{a,b\}^*$. The
minimality of the word $w$ implies that the image of the state set
under the action of the word $w'$ is equal to either $\{1,n-1\}$
or $\{2,n\}$.

Since the word $a^2$ acts in $\mathrsfs{B}_n$ in the same way as
the letter $a$, this word cannot occur in $w$ as a factor.
Further, the word $b^n$ acts in $\mathrsfs{B}_n$ as the identity
transformation and hence it also cannot occur as a factor in a
\ssw. Thus, we conclude that $w=ab^{k_1}ab^{k_2}a\cdots
ab^{k_m}a$, where $1\le k_1,k_2,\dots,k_m\le n-1$.

Let $c=ab$. Then the word $w'$ and the word $v=cb^{k_1-1}cb^{k_2-1}c\cdots cb^{k_m-1}$ act on the set $\{1,2,\dots,n\}$ in the same way.
Therefore the word $vc$ is a \sw\ for the automaton $\mathrsfs{R}$ defined by the actions of the words $b$ and $c=ab$ and shown in
Fig.\,\ref{fig:avz-n} (right). It is clear that the automaton $\mathrsfs{R}$ is isomorphic to a coloring of the digraph $R_n$.
Theorem~\ref{dulmage}(d) and Proposition~\ref{lower bound} imply that the \rl\ for colorings of the digraph $R_n$ cannot be less than
$(n^2-3n+3)-(n-1)=n^2-4n+4$ whence the length of $v$ as a word over $\{b,c\}$, that is, $\sum_{i=1}^mk_i$, is not less than $n^2-4n+3$.
Since $k_i\le n-1$ for all $i=1,\dots,m$, we have
\begin{equation}
\label{inequality} m(n-1)\ge\sum_{i=1}^mk_i\ge n^2-4n+3=(n-3)(n-1),
\end{equation}
whence $m\ge n-3$. The equality $m=n-3$ is only possible when all
inequalities in~\eqref{inequality} become equalities, that is when
$k_i=n-1$ for all $i=1,\dots,m$. In this case
$vc=(cb^{n-2})^{n-3}c$, but this word is not a \sw\ for
$\mathrsfs{R}$ since, as it easy to see, this word permutes the
states~2 and~3. Hence, $m\ge n-2$.

Since every occurrence of $c$ in $v$ corresponds to an occurrence
of the factor $ab$ in $w'$, we conclude that the length of $w'$ is
at least $(n^2-4n+3)+(n-2)=n^2-3n+1$, whence the length of $w$ is
at least $n^2-3n+2$.
\end{proof}

\paragraph*{4.6. Automata related to digraphs of the series $G_n$.}
The digraph $G_n$ is the $n$-digraph corresponding to the second matrix in~\eqref{odd island1}.  One obtains it from the digraph $V_n$ by
adding the edge $(n-2,1)$. The digraph $G_n$ is primitive only when $n$ is odd, and in this case its exponent is equal to $n^2-3n+2$.
Fig.~\ref{fig:aut:fat} shows the automaton $\mathrsfs{G}_n$ which is one of possible colorings of the digraph $G_n$. This series is
interesting for us because for odd $n$, its automata attain the maximal observed ``continental'' value of \rl.

\begin{figure}[ht]
\begin{center}
\unitlength .5mm
\begin{picture}(100,95)(0,-95)
\gasset{Nw=16,Nh=16,Nmr=8} \node[NLangle=0.0](n0)(24.09,-24.01){$n{-}1$}
\node[NLangle=0.0](n1)(51.93,-12.0){$n$}
\node[NLangle=0.0](n2)(80.09,-24.01){$1$}
\node[NLangle=0.0](n3)(12.09,-48.01){$n{-}2$}
\node[NLangle=0.0](n4)(16.09,-76.01){$n{-}3$}
\node[NLangle=0.0](n5)(92.09,-48.01){$2$}
\node[NLangle=0.0](n6)(64.09,-92.01){$4$}
\node[NLangle=0.0](n7)(88.09,-76.01){$3$} \drawedge(n1,n2){$b$}
\drawedge(n2,n5){$a, b$} \drawedge(n5,n7){$a, b$} \drawedge(n7,n6){$a,b$}
\drawedge(n4,n3){$a, b$} \drawedge(n3,n0){$b$} \drawedge(n0,n1){$a,b$}
\drawedge(n1,n7){$a$} \drawedge(n3,n2){$a$} \put(33,-90){$\dots$}
\end{picture}
\end{center}
\caption{The automaton $\mathrsfs{G}_n$} \label{fig:aut:fat}
\end{figure}

\begin{theorem}
For odd $n>3$, the automaton $\mathrsfs{G}_n$ is synchronizing and
its \rl\ is equal to $n^2-4n+7$.
\end{theorem}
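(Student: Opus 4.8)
The plan is to prove both bounds in the style of the earlier theorems: a short \sw\ gives the upper bound, while the lower bound comes from the exponent of $G_n$ refined by an analysis of how $\mathrsfs{G}_n$ collapses its states.

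First I would read off the two letters from Fig.~\ref{fig:aut:fat}: $b$ is the full cycle $i\mapsto i+1\pmod n$, and $a$ coincides with $b$ except that $a(n-2)=1$ and $a(n)=3$, so that $a$ has the single non-singleton fibre $a^{-1}(3)=\{2,n\}$. Thus the only elementary merge available is that of the pair $\{2,n\}$, which sits at cyclic distance $2$. Synchronization then proceeds as for the \v{C}ern\'{y} automaton $\mathrsfs{C}_n$: rotate by a power of $b$ until two of the surviving states occupy the positions $2$ and $n$, apply $a$ to glue them, and repeat. Rotating by $b^{n-2}$ between consecutive merges handles the bulk of the states, and a short tail disposes of the last few; I would write the resulting word out explicitly and check by a direct computation that it resets $\mathrsfs{G}_n$ and has length $n^2-4n+7$.

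For the lower bound I would argue as in Theorems~\ref{theorem:new series} and~\ref{theorem:avz}. Since $b$ is a permutation, a \ssw\ $w$ cannot end (nor begin) with $b$; and since $3$ is the unique state that is the common end of two distinct equally labelled edges (the $a$-edges $(2,3)$ and $(n,3)$), minimality forces $w$ to reset $\mathrsfs{G}_n$ to $3$ and to end with $a$. The simple cycles of $G_n$ have only the lengths $n$ and $n-2$, so every closed walk of $G_n$ has length in the numerical semigroup $\langle n,n-2\rangle$, whose Frobenius number is $n^2-4n+2$ by Lemma~\ref{sylvester}. Tracing $w$ from state $2$ (respectively, from state $1$) is forced along the unique out-edges $2\to3$ (respectively $1\to2\to3$) and becomes a closed walk at $3$ afterwards, so $|w|-1$ and $|w|-2$ both lie in $\langle n,n-2\rangle$. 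As $n^2-4n+2$ is not representable, this excludes $|w|=n^2-4n+3$ and $|w|=n^2-4n+4$; together with the estimate $|w|\ge\gamma(G_n)-(n-1)=n^2-4n+3$ furnished by Theorem~\ref{dulmage}(d) and Proposition~\ref{lower bound}, it yields $|w|\ge n^2-4n+5$.

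The hard part will be to rule out the two remaining values $n^2-4n+5$ and $n^2-4n+6$. For these, $|w|-1$ and $|w|-2$ are both representable in $\langle n,n-2\rangle$, so the cycle-length bookkeeping no longer decides anything and one must exploit the full synchronization requirement. Here I would track the images $\delta(Q,u)$ as $u$ runs through the prefixes of $w$: each drop in size is produced by one letter $a$ applied to a set containing $\{2,n\}$, while $b$ only rotates, so the blocks of $b$'s separating successive merges must be long. The invariant that two states at cyclic distance $2$ stay at distance $2$ under either letter until the moment they are merged should let me bound the total rotation from below and show that a word of length $n^2-4n+5$ or $n^2-4n+6$ is forced into a configuration that fails to synchronize --- the same kind of extremal-case exclusion that rules out $m=n-3$ at the end of the proof of Theorem~\ref{theorem:avz}. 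Making this last step precise is where I expect the genuine difficulty to lie.
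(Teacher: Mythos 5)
Your overall skeleton does match the paper's proof: minimality forces $w$ to send every state to~$3$ and to begin and end with $a$, the simple cycles of $G_n$ have lengths $n$ and $n-2$ only, and Lemma~\ref{sylvester} (Frobenius number $n^2-4n+2$ of $\langle n,n-2\rangle$) together with Theorem~\ref{dulmage}(d) and Proposition~\ref{lower bound} rules out the lengths $n^2-4n+3$ and $n^2-4n+4$ exactly as you say. But both of the places you leave open are genuine gaps, and in one of them your stated plan would fail.

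The larger gap is the lower bound. Your claim that for $|w|=n^2-4n+5$ and $n^2-4n+6$ ``the cycle-length bookkeeping no longer decides anything'' is simply wrong: the paper closes both cases with the \emph{same} bookkeeping, started from other states. Since $b$ is a permutation, $w$ begins with $a$, so the walk labelled $w$ from state $n-2$ begins with the $a$-edge $(n-2,1)$; hence a walk from $1$ to $3$ of length $|w|-1$ must exist. Any walk from $1$ to $3$ is forced through $(1,2)$ and $(2,3)$ and then closes up at $3$, so its length minus $2$ must lie in $\langle n,n-2\rangle$; for $|w|=n^2-4n+5$ this produces the Frobenius number $n^2-4n+2$ again, a contradiction. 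For $|w|=n^2-4n+6$ one splits on the \emph{second} letter of $w$: if it is $a$, trace $w$ from $n-3$ (forced edges $(n-3,n-2)$, then $(n-2,1)$); if it is $b$, trace $w$ from $n-1$ (forced edges $(n-1,n)$, then $(n,1)$); either way one needs a walk of length $n^2-4n+4$ from $1$ to $3$, which was just shown not to exist. Your proposed substitute --- image tracking with the invariant that states at cyclic distance $2$ stay at distance $2$ --- cannot be repaired as stated, because the invariant is false: $a$ sends the pair $\{n-4,n-2\}$ (distance $2$) to $\{n-3,1\}$ (distance $4$).

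The upper bound is also not secured by your sketch. Merging requires the configuration $\{n,2\}$, which sits at forward distance $2$, and $b$ preserves cyclic distances; if between merges you only apply powers of $b$, the last surviving pair ends up at cyclic distance $1$ (for $n=7$ every branch of your recipe terminates in $\{2,3\}$ or $\{3,4\}$), and no power of $b$ can move a distance-$1$ pair onto $\{n,2\}$, so the recipe stalls without synchronizing. Repairing it with non-merging applications of $a$ makes the word too long: for $n=7$, greedy merging needs $29$ letters, while $n^2-4n+7=28$. The paper's word $a^2(baba^{n-3})^{n-4}baba^2$ has a genuinely different shape: after $a^2$ drives the image into $\{1,\dots,n-2\}$, almost all rotation is done by $a$, which acts as a rotation of the short $(n-2)$-cycle there; greedy merging is provably suboptimal for this automaton. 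So ``write out the word and check'' presupposes having the right word, which your strategy does not produce.
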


\begin{proof}
It is easy to see that for odd $n>3$, the word
$a^2(baba^{n-3})^{n-4}baba^2$ is a \sw\ for the automaton
$\mathrsfs{G}_n$. The length of this word is equal to
$2+n(n-4)+5=n^2-4n+7$.

The following arguments are quite similar to ones from the proof
of Theorem~\ref{theorem:new series}. Theorem~\ref{dulmage}(d) and
Proposition~\ref{lower bound} imply that the \rl\ for colorings of
the digraph $G_n$ cannot be the less than
$(n^2-3n+2)-(n-1)=n^2-4n+3$.

Now let $w$ be a \ssw\ for the automaton $\mathrsfs{G}_n$. Since 3 is the unique state in $\mathrsfs{G}_n$ that is a common end of two
different edges with the same label, the minimality of $w$ implies that $w$ sends all states of the automaton to~3. Suppose that the length
of $w$ is equal to $n^2-4n+3$. Then the digraph $G_n$ has a directed path of this length from~2 to~3. There is a unique edge starting at~2,
namely, $(2,3)$, hence the path consists of this edge followed by a directed cycle of length $n^2-4n+2$. The digraph $G_n$ has exactly
three simple directed cycles: one of length $n$ and two of length $n-2$. Observe that $n$ and $n-2$ are relatively prime since $n$ is odd.
Every directed cycle consists of simple directed cycles whence the number $n^2-4n+2$ (as the length of a directed cycle in $G_n$) must be a
non-negative integer combination of the numbers $n$ and $n-2$ (the lengths of simple directed cycles). However this is impossible by
Lemma~\ref{sylvester} since $n^2-4n+2=n(n-2)-n-(n-2)$.

Suppose that the length of $w$ is equal to $n^2-4n+4$. Then the
digraph $G_n$ has a directed path of this length from~1 to~3.
There is a unique edge starting at~1, namely, $(1,2)$, hence the
path consists of this edge followed by a directed path of length
$n^2-4n+3$ from~2 to~3. In the previous paragraph we have shown
that $G_n$ contains no directed path from~2 to~3 with this length.

Suppose that the length of $w$ is equal to $n^2-4n+5$. The word
$w$ sends $n-2$ to~3. There are two edges starting at $n-2$: есть
the edge $(n-2,1)$ labelled $a$ and the edge $(n-2,n-1)$ labelled
$b$. Since the letter $b$ acts as a permutation on the state set
of the automaton $\mathrsfs{G}_n$, the word $w$ starts with the
letter $a$. Therefore the first edge of the directed path from
$n-2$ to~3 labelled by $w$ is necessarily to edge $(n-2,1)$ and
this edge is followed by a directed path of length $n^2-4n+4$
from~1 to~3. In the previous paragraph we have shown that $G_n$
contains no directed path from~1 to~3 with this length.

Finally, let the length of $w$ is $n^2-4n+6$. The word $w$ sends
each of the states $n-3$ and $n-1$ to the state~3. If the second
letter of the word $w$ is $a$, then the directed path from $n-3$
to~3 labelled by $w$ starts with the edges $(n-3,n-2)$ and
$(n-2,1)$ which are followed by a directed path of length
$n^2-4n+4$ from~1 to~3, and such a path is impossible. If the
second letter of the word $w$ is  $b$, the directed path from
$n-1$ to~3 labelled by $w$ starts with the edges $(n-1,n)$ and
$(n,1)$, which again must be followed by an impossible directed
path of length $n^2-4n+4$ from~1 to~3.

Thus, we have proved that the \rl\ of the automaton
$\mathrsfs{G}_n$ cannot be less than $n^2-4n+7$.
\end{proof}

\section*{{\centerline{\large\bf \S5. Discussion and new conjectures}}}

\paragraph*{5.1. Two conjectures.} The constructions and the results presented in Section~4
witness that the interconnections between \rl{}s of automata with two input letters and exponents of primitive digraphs are sufficiently
tight. This conclusion is also supported by recent results by the third author~\cite{Gusev:2011}. We believe that these interconnections
deserve being further investigated. In order to make the future investigations more concrete, we formulate a very general conjecture in the
flavor of Theorem~\ref{dulmage}. This conjecture constitutes a strengthening of the \v{C}ern\'y conjecture for the case of automata with
two input letters and agrees with all theoretical and experimental results that we are aware of (including the most recent experimental
results from \cite{KoSz13}).

\begin{conjecture}
\label{general} \emph{(a) (The \v{C}ern\'y conjecture)} The \rl\
of every synchronizing $n$-automaton with two input letters does
not exceed $(n-1)^2$.

\emph{(b)} If $n>6$, then up to isomorphism there exists exactly
one synchronizing $n$-automaton with two input letters and \rl\
$(n-1)^2$, namely, the automaton $\mathrsfs{C}_n$.

\emph{(c)} If $n>6$, then there exists no synchronizing
$n$-automaton with two input letters whose \rl\ is greater than
$n^2-3n+4$ but less than $(n-1)^2$.

\emph{(d)} If $n>7$ and $n$ is odd, then up to isomorphism there
exists exactly one synchronizing $n$-automaton with two input
letters and \rl\ $n^2-3n+4$, namely, the automaton
$\mathrsfs{D}'_n$, exactly two synchronizing $n$-automata with two
input letters and \rl\ $n^2-3n+3$, namely, the automata
$\mathrsfs{W}_n$ and $\mathrsfs{F}_n$, and exactly three
synchronizing $n$-automata with two input letters and \rl\
$n^2-3n+2$, namely, the automata $\mathrsfs{E}_n$,
$\mathrsfs{D}''_n$, and $\mathrsfs{B}_n$. There exists no
synchronizing $n$-automaton with two input letters whose \rl\ is
greater than $n^2-4n+7$ but less than $n^2-3n+2$.

\emph{(e)} If $n>8$ and $n$ is even, then up to isomorphism there exists exactly one synchronizing $n$-automaton with two input letters and
\rl\ $n^2-3n+4$,  namely, the automaton $\mathrsfs{D}'_n$, exactly one synchronizing $n$-automaton with two input letters and \rl\
$n^2-3n+3$, namely, the automaton $\mathrsfs{W}_n$, and exactly two synchronizing $n$-automata with two input letters and \rl\ $n^2-3n+2$,
namely, the automata $\mathrsfs{E}_n$ and $\mathrsfs{D}''_n$. There exists no synchronizing $n$-automaton with two input letters whose \rl\
is greater than $n^2-4n+6$ but less than $n^2-3n+2$.
\end{conjecture}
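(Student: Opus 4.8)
The plan is to treat parts~(a)--(e) as one stratified classification, descending from the extreme value $(n-1)^2$ through the ``island'' values $n^2-3n+4$, $n^2-3n+3$, $n^2-3n+2$ down to the topmost ``continental'' value, in exact parallel with the stratification of primitive digraphs by exponent in Theorem~\ref{dulmage}. By Proposition~\ref{reduction} it suffices throughout to work with \scn\ automata, since passing to the subautomaton on the reachable states preserves both the \rl\ range in question and the property of being (isomorphic to) one of the listed series. The finitely many residual cases excluded by the hypotheses $n>6$, $n>7$, $n>8$ would be settled by the exhaustive computation of Section~2; the real content is a single uniform argument valid for all large $n$.

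The forward half of every claim is already established in Section~4: the automata $\mathrsfs{C}_n$, $\mathrsfs{W}_n$, $\mathrsfs{D}'_n$, $\mathrsfs{D}''_n$, $\mathrsfs{E}_n$, $\mathrsfs{F}_n$, $\mathrsfs{B}_n$, $\mathrsfs{G}_n$, $\mathrsfs{H}_n$ are synchronizing with the advertised \rl{}s. What remains is completeness, which I would split into a \emph{local} and a \emph{global} task. The local task---assuming that $D(\mathrsfs{A})$ is one of the near-extremal digraphs $W_n$, $D_n$, $V_n$, $R_n$, $G_n$ of Theorem~\ref{dulmage}, classify its two-letter colorings and their \rl{}s---is in essence already carried out in Subsections~4.2--4.6: reading a \ssw\ as a directed path in the digraph, decomposing that path into simple directed cycles, and invoking the Frobenius--Sylvester constraint of Lemma~\ref{sylvester} both computes the \rl\ of each admissible coloring and excludes the forbidden intermediate lengths (this is exactly how Theorem~\ref{theorem:new series} rules out the lengths $n^2-3n+2$ and $n^2-3n+3$ for $\mathrsfs{D}'_n$). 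The parity split between~(d) and~(e) then appears automatically, because $V_n$, $R_n$, $G_n$ are primitive only for odd $n$---their simple cycle lengths $n$ and $n-2$ being coprime precisely then---which is the same coprimality that Lemma~\ref{sylvester} exploits, so the odd-only series $\mathrsfs{B}_n$, $\mathrsfs{F}_n$, $\mathrsfs{G}_n$ occur in~(d) and are absent from~(e). The global task is to prove that any two-letter automaton with \rl\ in the target range must in fact have one of these digraphs; this is the genuinely open step.

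The hard part is twofold, and the first half is for now insurmountable: part~(a) is literally the \v{C}ern\'y conjecture, so the whole statement presupposes that the range above $(n-1)^2$ is empty and cannot be proved outright without resolving it. The second, structural, difficulty concerns the global task above. The only bridge between \rl\ and exponent at our disposal, the inequality $\gamma(D(\mathrsfs{A}))\le t+n-1$ of Proposition~\ref{lower bound}, runs the wrong way: it manufactures slowly \sa\ from high-exponent digraphs but does not let us infer the exponent of $\mathrsfs{A}$ from its \rl. Worse, substituting $t=(n-1)^2$ yields only $\gamma\le n^2-n$, which is weaker than Wielandt's bound and hence vacuous. What the program therefore reduces to is a \emph{converse} structural theorem bounding $t$ from above in terms of the cycle structure of $D(\mathrsfs{A})$, sharp enough to confine near-extremal automata to colorings of the Dulmage--Mendelsohn digraphs of Theorem~\ref{dulmage}. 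I would try to build such a converse by tracking the rank descent of image sets along an optimal reset word and showing that a maximal descent forces the permutation letter to be a single $n$-cycle and the other letter to have defect one, thereby reducing $\mathrsfs{A}$ to a one-cluster automaton, for which proven instances of the \v{C}ern\'y conjecture together with the cycle counting of the local task would close the classification. Making this rank-descent analysis tight enough to eliminate every forbidden intermediate value is, I expect, the true crux on which the entire conjecture turns.
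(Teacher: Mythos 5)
You were asked to assess a statement that the paper itself does not prove: this is Conjecture~\ref{general}, offered as open. The paper supports it only by (i) the constructions of Section~4, which supply the existence halves (the listed automata do attain the listed \rl{}s), and (ii) the exhaustive experiments of Section~2 and~\cite{KoSz13} for small state numbers; completeness for all $n$, including part~(a) which is literally the \v{C}ern\'y conjecture, is left unproved. Your proposal correctly recognizes all of this and does not pretend otherwise, so on the main point your assessment agrees with the paper's own stance.

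There is, however, a concrete flaw in the program you sketch. Your ``global task''---show that every two-letter $n$-automaton with \rl\ in the target range has $D(\mathrsfs{A})$ among the Dulmage--Mendelsohn digraphs $W_n$, $D_n$, $V_n$, $R_n$, $G_n$---is false as stated, and the extremal automaton itself refutes it: $D(\mathrsfs{C}_n)$ has loops at $n-1$ vertices, hence is none of these digraphs, and its exponent is small; the paper stresses exactly this in Section~5 (Fig.~\ref{fig:cerny}), where the digraph of $\mathrsfs{C}_4$ admits a coloring reset by a word of length~3. So a high-exponent digraph cannot be a \emph{necessary} condition on $D(\mathrsfs{A})$. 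The bridge the paper actually uses for $\mathrsfs{C}_n$, $\mathrsfs{E}_n$, $\mathrsfs{B}_n$, $\mathrsfs{F}_n$, $\mathrsfs{H}_n$ is not the identity of $D(\mathrsfs{A})$ but passage to the automaton defined by the actions of words such as $b$ and $c=ab$ (Proposition~\ref{prop:simple idempotent} and its variants; the dotted arrows of Fig.~\ref{fig:relation}), and this substitution \emph{changes} the underlying digraph---only the derived automaton is a coloring of $W_n$, $V_n$ or $R_n$. Any viable converse must therefore be formulated at the level of these rewritten automata (say, after isolating a permutation letter and a defect-one letter, which is the hypothesis of Proposition~\ref{prop:simple idempotent}), and establishing that every near-extremal automaton admits such a reduction is precisely the open crux, not a routine consequence of rank-descent bookkeeping. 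Two smaller inaccuracies: the ``local task'' is only partially done in the paper---all colorings are classified only for $W_n$ and $D_n$, while for $V_n$, $R_n$, $G_n$ the paper treats particular colorings, so even local completeness for parts~(d) and~(e) is missing; and the hypotheses $n>6$, $n>7$, $n>8$ do not delimit ``residual cases to be settled by computation'' but exclude genuine small-$n$ exceptions (e.g.\ Kari's 6-automaton with \rl\ $25=(6-1)^2$ in part~(b)), while the experimental verification covers only $n\le 9$ or thereabouts, not all $n$ below any asymptotic threshold your uniform argument would require.
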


We also formulate a more special conjecture that can be treated as a quantitative form of the Road Coloring Conjecture mentioned in
Section~3. Since now we know that every primitive digraph has a synchronizing coloring~\cite{Tr09}, the notion of \rl\ can be naturally
extended to primitive digraphs. Namely, we call the \emph{\rl} of a primitive digraph the minimum length of \sws\ for all synchronizing
colorings of the digraph. This immediately leads to the question of how the \rl\ of a primitive digraph depends on the vertex number.

We notice that the digraphs of slowly \sa\ may admit colorings
with small \rl. Fig.~\ref{fig:cerny} illustrates this remark: the
first coloring of the digraph shown in the left is the \v{C}ern\'y
automaton $\mathrsfs{C}_4$ whose shortest \sw\ has length~9 while
the second coloring can be reset by the word $a^3$ of length~3. In
this connection, the series $W_n$ is of interest. In this series
each digraph has a unique (up to isomorphism) coloring. Therefore
the \rl\ of this coloring found in Theorem~\ref{theorem:anan}
coincides with the \rl\ of the digraph $W_n$ and provides a lower
bound for the problem under consideration. We conjecture that this
bound is in fact tight.

\begin{conjecture}
\label{hybrid} The \rl\ of every primitive $n$-digraph does not
exceed $n^2-3n+3$. If $n>3$, then up to isomorphism there exists
exactly one primitive $n$-digraph with \rl\ $n^2-3n+3$, namely,
the digraph $W_n$.
\end{conjecture}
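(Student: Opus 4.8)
The plan is to turn both assertions into a single quantitative Road Coloring statement and to attack it through the exponent classification of Theorem~\ref{dulmage}. First I would dispose of $W_n$ itself. Since $W_n$ has, up to renaming of the two letters, the unique coloring $\mathrsfs{W}_n$, the \rl\ of the \emph{digraph} $W_n$ coincides with the \rl\ of the automaton $\mathrsfs{W}_n$, which equals $n^2-3n+3$ by Theorem~\ref{theorem:anan}. Hence $W_n$ attains the value claimed in Conjecture~\ref{hybrid} and the upper bound is tight. This lets me fuse the two assertions into a single \emph{Main Claim}: every primitive $n$-digraph not isomorphic to $W_n$ admits a synchronizing coloring whose \rl\ is at most $n^2-3n+2$. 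Indeed, given the Main Claim, every digraph other than $W_n$ has \rl\ at most $n^2-3n+2<n^2-3n+3$, which simultaneously yields the universal bound $n^2-3n+3$ and the uniqueness of $W_n$ as the maximizer for $n>3$.

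Next I would exploit Proposition~\ref{lower bound}: applied to an optimal synchronizing coloring it shows that the \rl\ of a primitive $n$-digraph $D$ is at least $\gamma(D)-n+1$. Thus only digraphs of near-Wielandt exponent can rival $W_n$. By Theorem~\ref{dulmage}(b) the sole primitive $n$-digraph of exponent above the gap, other than $W_n$, is $D_n$, with $\gamma(D_n)=(n-1)^2$; and $\mathrsfs{D}''_n$ is a synchronizing coloring of $D_n$ of \rl\ $n^2-3n+2$ by Theorem~\ref{theorem:new series}, so $D_n$ meets the Main Claim (with equality, since its lower bound is also $n^2-3n+2$). The finitely many ``island'' digraphs of Theorem~\ref{dulmage}(c)--(e)---the families $V_n,R_n,G_n,\dots$---I would treat by the same direct method used for $\mathrsfs{C}_n$, $\mathrsfs{F}_n$ and $\mathrsfs{B}_n$ in Section~4: for each one exhibit an explicit coloring together with a \sw\ of length at most $n^2-3n+2$.

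The decisive difficulty is the ``continent'': the primitive $n$-digraphs whose exponent lies below the gap. These are far too numerous to classify, so here a general construction is unavoidable, amounting to a quantitative Road Coloring Theorem that provides, for every such $D$, a synchronizing coloring reset by a word of length at most $n^2-3n+2$. The attack I would try is a dichotomy on the number $r$ of vertices of out-degree $\ge 2$. A primitive digraph has $r\ge 1$; when $r=1$ the digraph is two directed cycles through a common vertex (a single cycle with one chord), a Wielandt-type family that can be analyzed by hand and that returns $W_n$ as the unique slow member. When $r\ge 2$ the extra coloring freedom should allow one to fix a coloring in which one letter acts as a long cyclic permutation while the identifications forced by the other letter accumulate rapidly, producing a short \sw.

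Making this last step rigorous with a \v{C}ern\'y-strength $O(n^2)$ guarantee is the genuine obstacle: no method is presently known that bounds the best-coloring \rl\ of an arbitrary primitive digraph even by $O(n^2)$, and the task is of the same order of difficulty as the \v{C}ern\'y conjecture itself. This is exactly why the assertion is stated as a conjecture---the plan reduces it to this quantitative Road Coloring bound but does not resolve it.
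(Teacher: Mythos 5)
The statement you were asked about is a \emph{conjecture}: the paper contains no proof of it. What the paper does provide is exactly the observation you open with --- since $W_n$ admits a unique coloring up to isomorphism, the \rl\ of the digraph $W_n$ equals the \rl\ of $\mathrsfs{W}_n$, which is $n^2-3n+3$ by Theorem~\ref{theorem:anan}; this shows the conjectured bound would be tight and that $W_n$ is a witness, nothing more. Your proposal reproduces this supporting evidence, correctly reformulates the two assertions as a single claim about all other primitive digraphs, and --- crucially --- ends with the honest and accurate verdict that the remaining step (a quantitative Road Coloring theorem giving an $n^2-3n+2$ bound for every primitive $n$-digraph not isomorphic to $W_n$) is beyond known techniques. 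That verdict matches the paper's own stance, which is precisely why the statement is left as a conjecture there.

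One step in your plan is, however, a genuine non sequitur and should be flagged: from Proposition~\ref{lower bound} you infer that ``only digraphs of near-Wielandt exponent can rival $W_n$.'' The proposition gives a \emph{lower} bound, $\mathrm{rl}(D)\ge\gamma(D)-n+1$, so it shows that digraphs of near-Wielandt exponent are \emph{forced} to have large \rl; it says nothing to exclude a digraph of small exponent from having every coloring slow. In other words, the exponent classification of Theorem~\ref{dulmage} does not localize the potential rivals of $W_n$, and your ``continent'' is not merely the residual hard case but contains, a priori, all possible counterexamples. Your later paragraphs implicitly correct this (you acknowledge the continent is the decisive difficulty and that no $O(n^2)$ bound on the best-coloring \rl\ of an arbitrary primitive digraph is known), so your final conclusion stands, but the claimed reduction is weaker than you state: the verified cases ($W_n$, $D_n$, and explicit colorings of the island digraphs) constitute evidence for the conjecture, not a reduction of it.
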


Conjecture~\ref{hybrid} has been presented in several talks of the
second author since 2008 and some partial results towards its
proof have already been published,
see~\cite{Carpi&D'Alessandro:2010,Steinberg:2011}. It is clear
that Conjecture~\ref{hybrid} can be made more precise in the
flavor of Conjecture~\ref{general}: for instance, it is likely
that the digraph $D_n$ is the only (up to isomorphism) primitive
$n$-digraph with \rl\ $n^2-3n+2$, etc.

\paragraph*{5.2. The role of the alphabet size.} In our experiments
we restrict ourselves to automata with two input letters. This restriction is caused by the fact that an increase in the alphabet size
influences the number of automata much stronger than an increase in the state size. Therefore an exhaustive search through all automata
with more than two input letters is far beyond our computational capacities even  for automata with a modest number of states.
Table~\ref{table:number of automata} illustrates this fact. (The data in the table are calculated via a formula from~\cite{Lis69}.)

\begin{table}[ht]
\extrarowheight=2pt \tabcolsep=3.45pt  \caption{The number of
initially-connected automata with 2 and 3 input letters}
\label{table:number of automata} {\footnotesize
\begin{tabular}{|p{2.5cm}||c|c|c|} \hline
\# of states & 7 & 8 & 9  \\
\hline 2 input letters & \hfill 256\,182\,290 & \hfill 12\,665\,445\,248 & \hfill 705\,068\,085\,303\\
\hline 3 input letters &  500\,750\,172\,337\,212 &
572\,879\,126\,392\,178\,688 & 835\,007\,874\,759\,393\,878\,655 \\
\hline
\end{tabular}}
\end{table}

Nevertheless, there are some reasons to expect that the behavior of the function we are interested in (the number of \sa\ with a fixed
number of states as a function of \rl) does not heavily depend on the alphabet size. For instance, Trahtman's experiments whose results
were reported in~\cite{Tr06,Tr06a} revealed no 7-automaton with 3 or 4 input letters and with \rl\ larger than 32 and smaller than 36.
Thus, the value of the gap between the maximum and the next to maximum possible value of \rl\ is the same as for 7-automata with two input
letters.

We mention also the observation which, as far as we know, was first made in~\cite{Be10}: if there exists an upper bound of the form
$O(n^2)$ for the \rl\ of synchronizing $n$-automata with two input letters, then a bound of the same magnitude (but probably with a worse
constant) exists also for the \rl\ of synchronizing $n$-automata with any fixed size of the input alphabet.

\end{document}